\newtheorem{theorem}{Theorem}
\newtheorem{lemma}{Lemma}
\newtheorem{corollary}{Corollary}
\theoremstyle{definition}
\newtheorem{definition}{Definition}
\begin{document}
	\begin{center}
		\Large
	\textbf{Time-convolutionless master equations for composite open quantum systems}
	
		\large 
			\textbf{A.Yu. Karasev}\footnote{Faculty of Physics, Lomonosov Moscow State University,
				Leninskie Gory, Moscow 119991, Russia\\ E-mail:\href{mailto:artem.karasev.01@gmail.com}{artem.karasev.01@gmail.com}}
		\textbf{A.E. Teretenkov}\footnote{Department of Mathematical Methods for Quantum Technologies, Steklov Mathematical Institute of Russian Academy of Sciences,
			ul. Gubkina 8, Moscow 119991, Russia\\ E-mail:\href{mailto:taemsu@mail.ru}{taemsu@mail.ru}}
		\end{center}
		
			\footnotesize
			In this work we consider the master equations for composite open quantum systems. We provide purely algebraic formulae for terms of perturbation series defining such equations. We also give conditions under which the Bogolubov-van Hove limit exists and discuss some corrections to this limit. We present an example to illustrate our results. In particular, this example shows, that inhomogeneous terms in time-convolutionless master equations can vanish after reservoir correlation time, but lead to renormalization of initial conditions at such a timescale.
			\normalsize

	\section{Introduction}

The usual setup of open quantum systems theory assumes unitary dynamics of a system and a heat bath \cite{Breuer2002}. In this work we consider dynamics, where the system of interest and the reservoir initially constitute a composite open quantum system instead. Such an approach was not so widely discussed in literature, but has attracted more and more attention in recent works  \cite{Saideh2020, Finkelstein-Shapiro2020, Regent2023, Regent2023a}. It was also discussed in a setup very different from the usual open quantum systems one  \cite{Arefeva2017}. This approach seems to be natural when a system in a reservoir is composite, i.e. consists of subsystems, and we are only interested in one of the subsystems. Moreover, some authors hypothesize that even fundamental dynamics of our world could be described like open quantum system one rather than just unitary one \cite{Weinberg2016}.

We use the Nakajima-Zwanzig projection approach \cite{Nakajima1958, Zwanzig1960} in our work, both because it is widely used in open quantum systems theory, and because it allows using different projection, i.e. different approaches to identification of the system of interest inside the composite system, in a uniform manner. We assume a small coupling between the system of interest and the reservoir. We are looking for the approach which in principle can work in all the orders of perturbation theory, so in Section \ref{se:TCLequation} we derive time-convolutionless master equations. Such equations were introduced in \cite{Fulinski1967} and are widely used in open quantum systems theory \cite{Shibata1977, Breuer1999, Breuer2001, Semin2020}. We also use an inhomogenious term to take into account the non-factorizible initial conditions \cite{Chang1993}. Moreover, time-nonlocal equations are also useful in certain cases \cite{Filippov2022}.

Our work is organized from the general to the particular. We are concerned with the reduced dynamics of  the system of interest inside the composite system in the interaction picture. But the perturbative formulae from Section \ref{se:TCLequation} are general and do not take into account the origin of initial time-dependent equations which we project. So in Section \ref{sec:Explicit} we take into account that such initial equations occur in the interaction picture. This allows us to write more explicit formulae. Namely, the expression defining the time-convolutionless master equation becomes purely algebraic.

In our opinion the identification of a system inside a composed system is not physical by itself without some dynamical conditions. Typically such a condition is Markovianity of system dynamics. From the mathematical point of view Markovian dynamics occurs \cite{Davies1974, Accardi1990, Accardi2002} in the Bogolubov-van Hove limit \cite{Bogoliubov1946, VanHove1954}. So it is natural to ask when the Bogolubov-van Hove limit exists in our case. In Section \ref{sec:relaxCond} we introduce some sufficient conditions under which such a limit exists and discuss some  corrections to this limit.

To illustrate our results we consider an example in Section \ref{sec:example}. In our opinion, the most interesting feature of this example is that inhomogeneous terms in time-convolutionless master equations vanish in the Bogolubov-van Hove perturbation theory, but lead to renormalization of initial conditions.

In Conclusions we summarize our work and suggest some possible directions for further study.  

\section{Time-convolutionless equations}\label{se:TCLequation}

In this section we derive the time-convolutionless master equations. We consider only finite-dimensional matrices and maps throughout the work. We refer to linear maps between finite-dimensional matrices as superoperators.  Usually, the time-convolutionless master equations  are defined perturbatively in terms of Kubo - van Kampen cumulants \cite{Kubo1963, VanKampen1974, VanKampen1974a, Chaturvedi1979, Shibata1980}. But here we represent them in terms of non-commutative analog of moments similar to \cite{Nestmann2019, Teretenkov2022} and then obtain explicit formulae for them in the next section. Namely, we obtain the following theorem describing perturbation series for the time-convolutionless master equation.

\begin{theorem}\label{th:masterEquation}
	Let $ \rho(t)$ be a matrix-valued function of $ t\in  [t_0, +\infty)$ which satisfies an ordinary differential equation
	\begin{equation}\label{eq:basicDiffEq}
		\frac{d}{dt} \rho(t) = \lambda \mathcal{L}(t)\rho(t),
	\end{equation}
	where $\mathcal{L}(t)$ is a superoperator-valued function, which is continuous for  $ t\in  [t_0, +\infty)$,  and $\lambda$ is a parameter,  $\mathcal{P}$ is an idempotent superoperator $ \mathcal{P}= \mathcal{P}^2 $ and $ \mathcal{Q} \equiv I - \mathcal{P}$, where $I$ is an identity superoperator. Then $\mathcal{P} \rho(t)$ satisfies the time-convolutionless master equation
	\begin{equation}\label{eq:masterEquation}
		\frac{d}{dt} \mathcal{P} \rho(t) = \mathcal{K}(t) \mathcal{P} \rho(t) + \mathcal{I}(t) \mathcal{Q} \rho(t_0) , 
	\end{equation}
	where $\mathcal{K}(t)$ has the following asymptotic expansion at $\lambda \rightarrow 0 $
	\begin{equation}\label{eq:expansionOfK}
		\mathcal{K}(t) = \sum_{n=1}^{\infty} \lambda^n \mathcal{K}_n(t)
	\end{equation}
	with
	\begin{equation}\label{eq:coeffKn}
		\mathcal{K}_n(t) =  \sum_{q=0}^{n-1} (-1)^q \sum_{\sum_{j=0}^q k_j =n, k_j \geqslant 1}  \dot{\mathcal{M}}_{k_0}(t) \mathcal{M}_{k_1}(t)  \ldots \mathcal{M}_{k_{q}}(t),
	\end{equation}
	i.e. the inner sum runs over all possible compositions, 
	\begin{equation}\label{eq:momentsDef}
		\mathcal{M}_{k}(t) = \int_{t_0}^t d t_1  \ldots  \int_{t_0}^{t_{k-1}} d t_k  \mathcal{P}  \mathcal{L}(t_1) \ldots \mathcal{L}(t_k)\mathcal{P} , \qquad \dot{\mathcal{M}}_{k}(t) \equiv \frac{d}{dt} \mathcal{M}_{k}(t),
	\end{equation}
	and  $\mathcal{I}(t)$ has the following asymptotic expansion at $\lambda \rightarrow 0 $
	\begin{equation}\label{eq:expansionOfI}
		\mathcal{I}(t)  = \sum_{n=1}^{\infty} \lambda^n \mathcal{I}_n(t)
	\end{equation}
	with
	\begin{equation*}
		\mathcal{I}_n(t) = \dot{\tilde{\mathcal{M}}}_{n}(t) +  \sum_{q=1}^{n-1} (-1)^q \sum_{\sum_{j=0}^q k_j =n, k_j \geqslant 1}  \dot{\mathcal{M}}_{k_0}(t) \mathcal{M}_{k_1}(t)  \ldots \mathcal{M}_{k_{q}-1}(t)\tilde{ \mathcal{M}}_{k_{q}}(t),
	\end{equation*}
	\begin{equation}\label{eq:comomentsDef}
		\tilde{\mathcal{M}}_k(t) = \int_{t_0}^t d t_1  \ldots  \int_{t_0}^{t_{k-1}} d t_k  \mathcal{P}  \mathcal{L}(t_1) \ldots \mathcal{L}(t_k)\mathcal{Q}, \qquad \dot{\tilde{\mathcal{M}}}_{k}(t) \equiv  \frac{d}{dt} \tilde{\mathcal{M}}_{k}(t).
	\end{equation}
\end{theorem}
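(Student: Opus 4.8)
The plan is to realize the moments and comoments as the $\mathcal{P}$-compressed blocks of the exact propagator of (\ref{eq:basicDiffEq}) and then obtain (\ref{eq:masterEquation}) by eliminating the initial datum $\mathcal{P}\rho(t_0)$ in favour of $\mathcal{P}\rho(t)$. Since $\mathcal{L}(t)$ is continuous on $[t_0,+\infty)$ and the dimension is finite, (\ref{eq:basicDiffEq}) has a unique solution $\rho(t)=\mathcal{G}(t,t_0)\rho(t_0)$, where $\mathcal{G}(t,t_0)$ is the time-ordered exponential, expanded as a Dyson series
\[
\mathcal{G}(t,t_0) = I + \sum_{k=1}^{\infty}\lambda^k\int_{t_0}^t dt_1\int_{t_0}^{t_1}dt_2\cdots\int_{t_0}^{t_{k-1}}dt_k\,\mathcal{L}(t_1)\mathcal{L}(t_2)\cdots\mathcal{L}(t_k),
\]
whose nested integration domain coincides with the one in (\ref{eq:momentsDef}) and (\ref{eq:comomentsDef}). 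In finite dimension this series converges locally uniformly in $t$, so it may be differentiated term by term; equivalently, I work order by order in $\lambda$, each coefficient being a finite integral of a continuous integrand and hence differentiable in $t$.

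Next I would compress with $\mathcal{P}$. Writing $\rho(t_0)=\mathcal{P}\rho(t_0)+\mathcal{Q}\rho(t_0)$ and applying $\mathcal{P}$, I set $\mathcal{A}(t)\equiv\mathcal{P}\,\mathcal{G}(t,t_0)\,\mathcal{P}$ and $\mathcal{B}(t)\equiv\mathcal{P}\,\mathcal{G}(t,t_0)\,\mathcal{Q}$. Inserting the Dyson series and using $\mathcal{P}^2=\mathcal{P}$ together with $\mathcal{P}\mathcal{Q}=0$, I identify, by comparison with (\ref{eq:momentsDef}) and (\ref{eq:comomentsDef}),
\[
\mathcal{A}(t)=\mathcal{P}+\sum_{k=1}^{\infty}\lambda^k\mathcal{M}_k(t),\qquad \mathcal{B}(t)=\sum_{k=1}^{\infty}\lambda^k\tilde{\mathcal{M}}_k(t),
\]
the $k=0$ term of $\mathcal{B}(t)$ vanishing since $\mathcal{P}\mathcal{Q}=0$. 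This gives the block relation $\mathcal{P}\rho(t)=\mathcal{A}(t)\mathcal{P}\rho(t_0)+\mathcal{B}(t)\mathcal{Q}\rho(t_0)$.

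The central step is the elimination of the initial datum. On the range of $\mathcal{P}$ one has $\mathcal{A}(t)=\mathcal{P}+O(\lambda)$ with $\mathcal{P}$ acting as the identity there, so $\mathcal{A}(t)$ admits an inverse as a formal power series in $\lambda$, namely the Neumann series $\mathcal{A}^{-1}(t)=\sum_{m=0}^{\infty}(-1)^m\bigl(\sum_{k\geq1}\lambda^k\mathcal{M}_k(t)\bigr)^m$. Solving the block relation gives $\mathcal{P}\rho(t_0)=\mathcal{A}^{-1}(t)\bigl(\mathcal{P}\rho(t)-\mathcal{B}(t)\mathcal{Q}\rho(t_0)\bigr)$. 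Differentiating the block relation, $\frac{d}{dt}\mathcal{P}\rho(t)=\dot{\mathcal{A}}(t)\mathcal{P}\rho(t_0)+\dot{\mathcal{B}}(t)\mathcal{Q}\rho(t_0)$, and substituting yields exactly (\ref{eq:masterEquation}) with
\[
\mathcal{K}(t)=\dot{\mathcal{A}}(t)\mathcal{A}^{-1}(t),\qquad \mathcal{I}(t)=\dot{\mathcal{B}}(t)-\mathcal{K}(t)\mathcal{B}(t),
\]
the constancy of $\mathcal{Q}\rho(t_0)$ in $t$ being what makes the equation time-local (convolutionless).

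It remains to extract the coefficients, which is where the combinatorial bookkeeping lives. Multiplying $\dot{\mathcal{A}}(t)=\sum_{k\geq1}\lambda^k\dot{\mathcal{M}}_k(t)$ by the Neumann series and collecting the power $\lambda^n$, the indices $k_0,\ldots,k_q\geq1$ with $\sum_{j=0}^{q}k_j=n$ range over all compositions of $n$ and reproduce (\ref{eq:coeffKn}), the bound $q\leq n-1$ coming from the need for $q+1$ positive parts. For $\mathcal{I}(t)$, the term $\dot{\mathcal{B}}(t)$ contributes $\dot{\tilde{\mathcal{M}}}_n(t)$ at order $n$, while $-\mathcal{K}(t)\mathcal{B}(t)$, after the index shift $q\to q-1$ and relabeling of the trailing factor onto a comoment $\tilde{\mathcal{M}}_{k_q}$, supplies the remaining sum in $\mathcal{I}_n(t)$. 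I expect the main obstacle to be the careful justification of the inversion: one must emphasize that $\mathcal{A}(t)$ is invertible only on the range of $\mathcal{P}$ (it annihilates the range of $\mathcal{Q}$), and that every manipulation is rigorous order by order in $\lambda$, in accordance with the asymptotic nature of (\ref{eq:expansionOfK}) and (\ref{eq:expansionOfI}); once this is secured, matching the compositions is routine.
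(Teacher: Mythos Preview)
Your proof is correct and follows essentially the same route as the paper: introduce the propagator via the Dyson series, form the $\mathcal{P}\mathcal{G}\mathcal{P}$ and $\mathcal{P}\mathcal{G}\mathcal{Q}$ blocks (the paper's $\mathcal{P}\mathcal{U}_{t_0}^t\mathcal{P}$ and $\mathcal{P}\mathcal{U}_{t_0}^t\mathcal{Q}$), invert the first block on the range of $\mathcal{P}$ via a Neumann series (the paper isolates this as a separate lemma and uses a pseudoinverse notation $(\cdot)^{(-1)}$), and read off $\mathcal{K}=\dot{\mathcal{A}}\mathcal{A}^{-1}$, $\mathcal{I}=\dot{\mathcal{B}}-\mathcal{K}\mathcal{B}$. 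The only cosmetic difference is that the paper works at the level of the propagator rather than $\rho(t)$, but the algebra and the combinatorics of compositions are identical.
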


Remark that $  \mathcal{I}_n(t) $ has the same form as $ \mathcal{K}_n(t)$, but with  $ \mathcal{Q} $ instead of last  $ \mathcal{P} $ in each term, which coincides with \cite[Section 9.2.4]{Breuer2002}.

First of all let us prove the following lemma.	 
\begin{lemma}
	\label{lem:inverseMatrixSeries}
	Let $\mathcal{A}_k$ be superoperators, then the following asymptotic expansion holds
	\begin{equation}\label{eq:inverseMatrixSeries}
		\left(I + \sum_{k=1}^{\infty} \lambda^k \mathcal{A}_k\right)^{-1} 
		=  \sum_{n=0}^{\infty} \lambda^n \sum_{q=0}^n (-1)^q \sum_{\sum_{j=1}^q k_j =n, k_j \geqslant 1}  \mathcal{A}_{k_1} \ldots  \mathcal{A}_{k_q}, \qquad \lambda \rightarrow 0.
	\end{equation}
\end{lemma}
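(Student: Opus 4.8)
The plan is to recognize the left-hand side as a non-commutative geometric (Neumann) series and expand it directly. First I would set $\mathcal{B} \equiv \sum_{k=1}^{\infty} \lambda^k \mathcal{A}_k$ and observe that, as a power series in $\lambda$, it has no constant term, i.e.\ $\mathcal{B} = O(\lambda)$. Hence the Neumann series
\[
\left(I + \mathcal{B}\right)^{-1} = \sum_{q=0}^{\infty} (-1)^q \mathcal{B}^q
\]
is a legitimate asymptotic expansion at $\lambda \to 0$: since each factor of $\mathcal{B}$ raises the order in $\lambda$ by at least one, the term $(-1)^q \mathcal{B}^q$ contributes only to orders $\lambda^q$ and higher, so the coefficient of any fixed power of $\lambda$ receives contributions from finitely many $q$.

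Next I would expand each power by multiplying out the sums,
\[
\mathcal{B}^q = \sum_{k_1, \ldots, k_q \geqslant 1} \lambda^{k_1 + \cdots + k_q}\, \mathcal{A}_{k_1} \cdots \mathcal{A}_{k_q},
\]
and collect the coefficient of $\lambda^n$. The constraint $k_1 + \cdots + k_q = n$ with every $k_j \geqslant 1$ forces $q \leqslant n$, so the inner sum is empty for $q > n$ and the outer sum over $q$ truncates at $q = n$. Collecting terms then yields exactly the claimed coefficient
\[
\sum_{q=0}^{n} (-1)^q \sum_{\sum_{j=1}^q k_j = n,\; k_j \geqslant 1} \mathcal{A}_{k_1} \cdots \mathcal{A}_{k_q}.
\]
I would treat the boundary case $n = 0$ separately: only $q = 0$ contributes, giving the empty product $I$, consistent with $\left(I + \mathcal{B}\right)^{-1} = I + O(\lambda)$.

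There is no serious obstacle here; the only point requiring care is the justification that the Neumann series may be used and its terms regrouped by powers of $\lambda$. This is harmless precisely because $\mathcal{B}$ starts at order $\lambda$, so every power of $\lambda$ is a finite sum and no convergence issue arises. As an independent check (and an alternative proof avoiding any appeal to the geometric series), I could instead denote by $\mathcal{C}_n$ the proposed coefficient of $\lambda^n$ on the right-hand side and verify directly that $(I + \mathcal{B}) \sum_{n} \lambda^n \mathcal{C}_n = I$ order by order. Matching the coefficient of $\lambda^N$ gives the recursion $\mathcal{C}_N + \sum_{k=1}^{N} \mathcal{A}_k \mathcal{C}_{N-k} = 0$ for $N \geqslant 1$ (with $\mathcal{C}_0 = I$), which follows by splitting the $q$-fold sum defining $\mathcal{C}_N$ according to the value $k_1$ of its first index and reindexing the remaining $q-1$ factors as a composition of $N - k_1$.
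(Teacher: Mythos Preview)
Your argument is correct and follows essentially the same route as the paper: expand $(I+\mathcal{B})^{-1}$ as the Neumann series $\sum_q (-1)^q \mathcal{B}^q$, multiply out each $\mathcal{B}^q$, and collect by powers of $\lambda$. Your additional remarks on why the regrouping is finite at each order and the alternative recursion check are more than the paper provides, but the core proof is the same.
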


\begin{proof}
	By direct calculation we have
	\begin{align*}
		\left(I + \sum_{k=1}^{\infty} \lambda^k \mathcal{A}_k\right)^{-1}   &= \sum_{q=0}^{\infty}(-1)^q \left(\sum_{k=1}^{\infty} \lambda^k \mathcal{A}_k\right)^{q}
		\\
		&= \sum_{q=0}^{\infty}(-1)^q \sum_{n=1}^{\infty} \lambda^n  \sum_{\sum_{j=1}^q k_j =n, k_j \geqslant 1} \mathcal{A}_{k_1}  \ldots  \mathcal{A}_{k_q} \\ 
		&=  \sum_{n=0}^{\infty} \lambda^n  \sum_{\sum_{j=1}^q k_j =n, k_j \geqslant 1} (-1)^q \mathcal{A}_{k_1}  \ldots  \mathcal{A}_{k_q} 
	\end{align*}
	Thus, we obtain \eqref{eq:inverseMatrixSeries}.
\end{proof}

\begin{proof}[Proof of theorem \ref{th:masterEquation}]
	The solution of Equation~\eqref{eq:basicDiffEq} with given $\rho(t_0)$ is unique \cite[Theorem 5.2]{Coddington1955} and can be represented  in the form 
	\begin{equation*}
		\rho(t) = \mathcal{U}_{t_0}^t\rho(t_0),
	\end{equation*}
	where $\mathcal{U}_{t_0}^t$ is a superoperator-valued function of $t$ which is a unique solution of the Cauchy problem
	\begin{equation}\label{eq:propEq}
		\frac{d}{dt}\mathcal{U}_{t_0}^t = \lambda \mathcal{L}(t) \mathcal{U}_{t_0}^t, \qquad \mathcal{U}_{t_0}^{t_0} = I.
	\end{equation}
	Applying the projector $\mathcal{P}$ on both sides of this equation we obtain
	\begin{align*}
		\frac{d}{dt} \mathcal{P}\mathcal{U}_{t_0}^t  &= \frac{d}{dt}  \mathcal{P}\mathcal{U}_{t_0}^t \mathcal{P} +  \frac{d}{dt} \mathcal{P}\mathcal{U}_{t_0}^t \mathcal{Q} = \left(\frac{d}{dt}  \mathcal{P}\mathcal{U}_{t_0}^t \mathcal{P}\right) (\mathcal{P}\mathcal{U}_{t_0}^t \mathcal{P})^{(-1)} \mathcal{P}\mathcal{U}_{t_0}^t \mathcal{P} +  \frac{d}{dt} \mathcal{P}\mathcal{U}_{t_0}^t \mathcal{Q} \\
		&= \left(\frac{d}{dt}  \mathcal{P}\mathcal{U}_{t_0}^t \mathcal{P}\right) (\mathcal{P}\mathcal{U}_{t_0}^t \mathcal{P})^{(-1)} \mathcal{P}\mathcal{U}_{t_0}^t  -  \left(\frac{d}{dt}  \mathcal{P}\mathcal{U}_{t_0}^t \mathcal{P}\right) (\mathcal{P}\mathcal{U}_{t_0}^t \mathcal{P})^{(-1)} \mathcal{P}\mathcal{U}_{t_0}^t \mathcal{Q} +  \frac{d}{dt} \mathcal{P}\mathcal{U}_{t_0}^t \mathcal{Q},
	\end{align*}
	where for some supeoperator $\mathcal{A}$
	\begin{equation}\label{eq:pseudoInverse}
		\mathcal{A}^{(-1)}
	\end{equation}
	denotes the pseudoinverse such that $\mathcal{A}^{(-1)} \mathcal{A} = \mathcal{P}$, $\mathcal{A}^{(-1)} \mathcal{Q} = \mathcal{Q} \mathcal{A}^{(-1)}  = 0$, where in general $ \mathcal{P} $ is a projector to the image of $ \mathcal{A} $ and $  \mathcal{Q}  $ to it kernel.
	
	Taking into account the idempotent property $ \mathcal{P}^2= \mathcal{P}  $ and $ \mathcal{Q}^2 = (I- \mathcal{P})^2 = I - 2 \mathcal{P} + \mathcal{P}^2= I - \mathcal{P} = \mathcal{Q} $ we obtain
	\begin{equation*}
		\frac{d}{dt} \mathcal{P}\mathcal{U}_{t_0}^t  = \mathcal{K}(t) \mathcal{P}\mathcal{U}_{t_0}^t  + \mathcal{I}(t) \mathcal{Q},
	\end{equation*}
	with
	\begin{equation}\label{eq:Kdef}
		\mathcal{K}(t) \equiv  \left(\frac{d}{dt}  \mathcal{P}\mathcal{U}_{t_0}^t \mathcal{P}\right) (\mathcal{P}\mathcal{U}_{t_0}^t \mathcal{P})^{(-1)}
	\end{equation}
	and
	\begin{equation}\label{eq:Idef}
		\mathcal{I}(t) \equiv -  \left(\frac{d}{dt}  \mathcal{P}\mathcal{U}_{t_0}^t \mathcal{P}\right) (\mathcal{P}\mathcal{U}_{t_0}^t \mathcal{P})^{(-1)} \mathcal{P}\mathcal{U}_{t_0}^t \mathcal{Q} +  \frac{d}{dt} \mathcal{P}\mathcal{U}_{t_0}^t \mathcal{Q}  =  \frac{d}{dt} \mathcal{P}\mathcal{U}_{t_0}^t \mathcal{Q} -  \mathcal{K}(t)  \mathcal{P}\mathcal{U}_{t_0}^t \mathcal{Q} .
	\end{equation}
	
	Expanding  $\mathcal{U}_{t_0}^t $ defined by Cauchy problem \eqref{eq:propEq} asymptotically in the Dyson series we obtain for $\lambda \rightarrow 0$
	\begin{equation*}
		\mathcal{U}_{t_0}^t = \sum_{k=0}^{\infty} \lambda^k \int_{t_0}^t d t_1  \ldots  \int_{t_0}^{t_{k-1}} d t_k \mathcal{L}(t_1) \ldots \mathcal{L}(t_k).
	\end{equation*}
	
	Hence, we have
	\begin{equation*}
		\mathcal{P}\mathcal{U}_{t_0}^t\mathcal{P} = \mathcal{P} +\sum_{k=1}^{\infty} \lambda^k \mathcal{M}_{k}(t), \qquad 	\frac{d}{dt}\mathcal{P}\mathcal{U}_{t_0}^t\mathcal{P} = \sum_{k=1}^{\infty} \lambda^k \dot{\mathcal{M}}_{k}(t).
	\end{equation*}
	Then using Lemma \ref{lem:inverseMatrixSeries} from \eqref{eq:Kdef} we obtain
	\begin{align*}
		\mathcal{K}(t) =\frac{d}{dt}\mathcal{P}\mathcal{U}_{t_0}^t \mathcal{P} \left(\mathcal{P}\mathcal{U}_{t_0}^t \mathcal{P}\right)^{(-1)} &= \sum_{k=0}^{\infty} \lambda^k \dot{\mathcal{M}}_{k}(t)  \left(\sum_{m=0}^{\infty} \lambda^m \mathcal{M}_{m}(t) \right)^{(-1)} \\
		&=  \sum_{k=0}^{\infty} \lambda^k \dot{\mathcal{M}}_{k}(t) \left(  \sum_{n=0}^{\infty} \lambda^n \sum_{q=0}^n (-1)^q \sum_{\sum_{j=1}^q k_j =n, k_j \geqslant 1}  \mathcal{M}_{k_1}(t) \ldots  \mathcal{M}_{k_q}(t)\right)\\
		&= \sum_{n=0}^{\infty} \lambda^n \sum_{q=0}^{n-1} (-1)^q \sum_{\sum_{j=0}^q k_j =n, k_j \geqslant 1}  \dot{\mathcal{M}}_{k_0}(t) \mathcal{M}_{k_1}(t)  \ldots \mathcal{M}_{k_{q}}(t)
	\end{align*}
	Thus, we obtain \eqref{eq:expansionOfK}.
	
	Similarly, we have
	\begin{equation*}
		\mathcal{P}\mathcal{U}_{t_0}^t\mathcal{Q} = \sum_{k=1}^{\infty} \lambda^k \tilde{\mathcal{M}}_{k}(t), \qquad 	\frac{d}{dt}\mathcal{P}\mathcal{U}_{t_0}^t\mathcal{Q} = \sum_{k=1}^{\infty} \lambda^k \dot{\tilde{\mathcal{M}}}_{k}(t),
	\end{equation*}
	then \eqref{eq:Idef} leads to
	\begin{align*}
		\mathcal{I}(t) 
		&= \sum_{k=1}^{\infty} \lambda^n \dot{\tilde{\mathcal{M}}}_{n}(t) - \left(\sum_{n=0}^{\infty} \lambda^n \sum_{q=0}^{n-1} (-1)^q \sum_{\sum_{j=0}^q k_j =n, k_j \geqslant 1}  \dot{\mathcal{M}}_{k_0}(t) \mathcal{M}_{k_1}(t)  \ldots \mathcal{M}_{k_{q}}(t)\right) \left(\sum_{k=1}^{\infty} \lambda^k \dot{\tilde{\mathcal{M}}}_{k}(t)\right)\\
		&= \sum_{k=1}^{\infty} \lambda^n \left(\dot{\tilde{\mathcal{M}}}_{n}(t) +  \sum_{q=1}^{n-1} (-1)^q \sum_{\sum_{j=0}^q k_j =n, k_j \geqslant 1}  \dot{\mathcal{M}}_{k_0}(t) \mathcal{M}_{k_1}(t)  \ldots \mathcal{M}_{k_{q}-1}(t)\tilde{ \mathcal{M}}_{k_{q}}(t)\right)
	\end{align*}
	Thus, we obtain \eqref{eq:expansionOfI}.
\end{proof}

Let us provide explicitly  several first terms $ \mathcal{K}_n(t) $  in terms of $ \mathcal{M}_n(t) $ to illustrate formula \eqref{eq:coeffKn}.
\begin{equation}\label{eq:K1M1K2M2}
	\mathcal{K}_1(t) = \dot{\mathcal{M}}_{1}(t), \qquad \mathcal{K}_2(t) = \dot{\mathcal{M}}_{2}(t) -  \dot{\mathcal{M}}_{1}(t) \mathcal{M}_{1}(t),
\end{equation}
\begin{equation*}
	\mathcal{K}_3(t) = \dot{\mathcal{M}}_{3}(t) -  \dot{\mathcal{M}}_{2}(t) \mathcal{M}_{1}(t) -  \dot{\mathcal{M}}_{1}(t) \mathcal{M}_{2}(t) + \dot{\mathcal{M}}_{1}(t) \mathcal{M}_{1}(t) \mathcal{M}_{1}(t).
\end{equation*}

Remark that we are interested in the dynamics $\mathcal{P} \rho(t)$, but there are known methods \cite{Trushechkin2019} which allow one to calculate $\mathcal{Q} \rho(t)$ perturbatively using the perturbative result for  $\mathcal{P} \rho(t)$.

\section{Explicit formulae in the case of interaction picture}
\label{sec:Explicit}
Similar to the usual open quantum systems setup \cite[Section 9.1]{Breuer2002} it is natural to start with the equation with the time-independent generator of the form
\begin{equation*}
	\frac{d}{dt}\tilde{\rho}(t) = (\mathcal{L}_0 + \lambda \mathcal{L})\tilde{\rho}(t),
\end{equation*}
where $ \mathcal{L}_0  $ and $\mathcal{L}$ are Gorini--Kossakowski--Sudarshan--Lindblad (GKSL) generators \cite{Gorini1976, Lindblad1976} in our case.
And then one can  move to the interaction picture
\begin{equation*}
	\rho(t) \equiv e^{ -\mathcal{L}_0 t }\tilde{\rho}(t),
\end{equation*}
which leads to the equation of the form \eqref{eq:basicDiffEq} with $ \mathcal{L}(t)  = e^{- \mathcal{L}_0 t}\mathcal{L} e^{\mathcal{L}_0 t} = e^{- \mathfrak{L} t} \mathcal{L} $, where $  \mathfrak{L} = [\mathcal{L}_0, \; \cdot \;] $. Here $  [\mathcal{L}_0, \; \cdot \;]  $ denotes the map from superoperators to superoperators which is defined on superoperator $\mathcal{A}$ by the formula  $  [\mathcal{L}_0, \; \cdot \;] \mathcal{A} =  [\mathcal{L}_0, \mathcal{A}]  $.

To write some explicit algebraic formulae below, we will need the following definition.

\begin{definition}
	Let us define the contraction map as the linear map such that for any superoperators  $\mathcal{A}_1$,  \ldots , $\mathcal{A}_k$ one has
	\begin{equation*}
		\mathfrak{C}(\mathcal{A}_1 \otimes \ldots \otimes \mathcal{A}_k) =  \mathcal{A}_1  \ldots \mathcal{A}_k.
	\end{equation*}
\end{definition}

\begin{lemma}\label{lem:compressionLemma}
	For $  \mathcal{L}(t)  = e^{- \mathfrak{L} t} \mathcal{L}$ for some linear map $ \mathfrak{L}  $ and let us denote by subscript $j$ in $\mathfrak{L}_j$ the fact that this map acts as  $\mathfrak{L}$  in the $j$-th tensor multiplicand and as identity in other ones. Then
	\begin{equation}\label{eq:compressionLemma}
		\mathcal{L}(t_1 ) \ldots \mathcal{L}(t_k) = \mathfrak{C}(e^{- \sum_{j=1}^kt_j \mathfrak{L}_j}  \mathcal{L}^{\otimes k}).
	\end{equation}
\end{lemma}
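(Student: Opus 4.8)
The plan is to work entirely on the right-hand side of \eqref{eq:compressionLemma} and show that it collapses to the left-hand side. The crucial structural fact is that the maps $\mathfrak{L}_j$, each acting as $\mathfrak{L}$ in the $j$-th tensor slot and as the identity in all other slots, act on mutually distinct tensor factors and therefore commute: $\mathfrak{L}_i \mathfrak{L}_j = \mathfrak{L}_j \mathfrak{L}_i$ for all $i,j$. This commutativity is the only nontrivial ingredient; once it is established the rest is bookkeeping.

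First I would use this commutativity to factorize the exponential of the sum into an ordered product of single-slot exponentials,
\[
e^{-\sum_{j=1}^k t_j \mathfrak{L}_j} = e^{-t_1 \mathfrak{L}_1} \cdots e^{-t_k \mathfrak{L}_k},
\]
which is legitimate precisely because the summands pairwise commute. Next I would apply this product to $\mathcal{L}^{\otimes k}$. Since $e^{-t_j \mathfrak{L}_j}$ acts as $e^{-t_j \mathfrak{L}}$ in the $j$-th factor and trivially elsewhere, it follows that
\[
e^{-\sum_{j=1}^k t_j \mathfrak{L}_j}\, \mathcal{L}^{\otimes k} = \left(e^{-t_1 \mathfrak{L}}\mathcal{L}\right) \otimes \cdots \otimes \left(e^{-t_k \mathfrak{L}}\mathcal{L}\right).
\]

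By the definition $\mathcal{L}(t) = e^{-\mathfrak{L}t}\mathcal{L}$, each factor on the right is exactly $\mathcal{L}(t_j)$, so the argument of $\mathfrak{C}$ equals $\mathcal{L}(t_1) \otimes \cdots \otimes \mathcal{L}(t_k)$. Applying the contraction map and invoking its defining property then converts this tensor product into the ordered composition $\mathcal{L}(t_1)\cdots\mathcal{L}(t_k)$, which is precisely the left-hand side of \eqref{eq:compressionLemma}.

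I do not expect a substantive obstacle here, as the content is essentially organizational. The one point that requires care is the factorization of the exponential, i.e.\ confirming that $\mathfrak{L}_i$ and $\mathfrak{L}_j$ genuinely commute as operators on the tensor-product space of superoperators — this is where the clause ``acts as identity on the other factors'' does the real work, since operators supported on disjoint slots always commute. Once commutativity is in hand, everything reduces to matching the per-slot exponential $e^{-t_j\mathfrak{L}}$ against the definition of $\mathcal{L}(t_j)$ and reading off the result through $\mathfrak{C}$.
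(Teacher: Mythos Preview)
Your argument is correct and is essentially the same as the paper's: the paper runs the chain of equalities from left to right (product $\to$ contraction of tensor product $\to$ factor out $(e^{-\mathfrak{L}t_1}\otimes\cdots\otimes e^{-\mathfrak{L}t_k})$ $\to$ identify with $e^{-\sum_j t_j\mathfrak{L}_j}$), whereas you run it from right to left. Your explicit emphasis on the commutativity of the $\mathfrak{L}_j$ is the one point the paper leaves implicit, but otherwise the content is identical.
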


\begin{proof}
	By direct calculation we have
	\begin{align*}
		\mathcal{L}(t_1 ) \ldots \mathcal{L}(t_k) &=	\mathfrak{C}(\mathcal{L}(t_1 ) \otimes \ldots \otimes  \mathcal{L}(t_k) ) 
		= \mathfrak{C}(e^{- \mathfrak{L} t_1}  \mathcal{L}  \otimes \ldots \otimes  e^{- \mathfrak{L} t_k}  \mathcal{L} )\\
		&= \mathfrak{C}((e^{- \mathfrak{L} t_1}    \otimes \ldots \otimes  e^{- \mathfrak{L} t_k} ) (\mathcal{L} \otimes \ldots \otimes\mathcal{L}) )  = \mathfrak{C}(e^{- \sum_{j=1}^kt_j \mathfrak{L}_j}  \mathcal{L}^{\otimes k}).
	\end{align*}
	Thus, we obtain \eqref{eq:compressionLemma}.
\end{proof}

We also need Lemma A1 from \cite{Teretenkov2022} which takes the following form.

\begin{lemma}\label{lemma:auxiliary}
	The following formula holds 
	\begin{align}
		h_{k}(t; \gamma_1, \ldots, \gamma_k)  &\equiv \int_{0}^{t} d t_1  \ldots  \int_{0}^{t_{k-1}} d t_k e^{ - \sum_{j=1}^k t_j \gamma_j} \nonumber \\
		&= \frac{1}{\prod_{n=1}^k \sum_{j=1}^n\gamma_j}
		+ \sum _{p=1}^k (-1)^p \frac{ e^{-t  \sum_{i=1}^p \gamma_i}}{\left(\prod_{m=1}^p	\sum_{i=m}^p \gamma_i\right) \left(\prod_{n=p+1}^k \sum_{j=p+1}^n \gamma_j\right)}.
		\label{eq:auxiliarylemma}
	\end{align}
	For zero denominators, the right-hand side should be understood as a limit.
\end{lemma}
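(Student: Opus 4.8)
The plan is to prove the identity by induction on $k$, exploiting the recursive structure of the nested integral. Splitting off the outermost integration gives the recursion
\[
h_k(t;\gamma_1,\ldots,\gamma_k) = \int_0^t ds\, e^{-s\gamma_1}\, h_{k-1}(s;\gamma_2,\ldots,\gamma_k),
\]
with the base case $h_1(t;\gamma_1) = \int_0^t e^{-s\gamma_1}\,ds = (1-e^{-t\gamma_1})/\gamma_1$, which matches \eqref{eq:auxiliarylemma} for $k=1$. Throughout I would write $S_p \equiv \sum_{i=1}^p\gamma_i$ for the partial sums, so that the exponentials appearing in \eqref{eq:auxiliarylemma} are exactly the $e^{-tS_p}$.

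For the inductive step I would substitute \eqref{eq:auxiliarylemma} for $h_{k-1}(s;\gamma_2,\ldots,\gamma_k)$ into the recursion and integrate each exponential term using $\int_0^t e^{-s\alpha}\,ds = (1-e^{-t\alpha})/\alpha$. Because the partial sums of the shifted arguments satisfy $\gamma_1 + \sum_{i=2}^p\gamma_i = S_p$, the exponent of the $q$-th term of $h_{k-1}$ becomes $S_{q+1}$ after multiplication by $e^{-s\gamma_1}$; reindexing $p=q+1$ then produces precisely the exponentials $e^{-tS_p}$ for $p=2,\ldots,k$, together with the term $e^{-tS_1}$ coming from the constant part of $h_{k-1}$.

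It then remains to check that the coefficients match. This reduces to two purely combinatorial identities relating the nested products of partial sums for the $k$-problem and for the shifted $(k-1)$-problem: that the $m=1$ factor $\sum_{i=1}^p\gamma_i = S_p$ of the first product in the denominator of the $p$-th coefficient cancels the factor $1/S_p$ produced by the integration, and that after this cancellation the remaining products coincide with the denominator of the $(p-1)$-th coefficient of the shifted problem under $\gamma_i \mapsto \gamma_{i+1}$. I expect this index bookkeeping to be the main obstacle: it is routine but delicate, since each coefficient is a product over two ranges of consecutive partial sums and the two ranges shift in opposite ways under $p\mapsto p-1$ and under the argument shift.

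Finally I would pin down the $t$-independent term. The cleanest route is a limiting argument: assuming first $\mathrm{Re}\,S_p>0$ for all $p$, both $h_k(t)$ and the right-hand side of \eqref{eq:auxiliarylemma} tend, as $t\to+\infty$, to the same value $\prod_{n=1}^k S_n^{-1}$ — for the integral this follows from the change of variables $u_j = t_j - t_{j+1}$ (with $t_{k+1}=0$), which factorizes $h_k(\infty) = \prod_{j=1}^k\int_0^\infty e^{-u_jS_j}\,du_j$. Since by the previous step the two $t$-dependent expressions have identical exponential parts, they can differ only by a constant, which the $t\to+\infty$ comparison forces to vanish; this gives equality whenever $\mathrm{Re}\,S_p>0$. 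As both sides are rational in $\gamma_1,\ldots,\gamma_k$ away from the vanishing denominators, the identity then extends to all admissible values, with coincident denominators understood as limits exactly as stated.
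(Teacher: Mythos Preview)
The paper does not actually prove this lemma; it is quoted as ``Lemma A1 from \cite{Teretenkov2022}'' and simply used without proof. So there is nothing in the paper to compare your argument against.

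Your inductive argument is correct and is the natural proof. The recursion $h_k(t;\gamma_1,\ldots,\gamma_k)=\int_0^t e^{-s\gamma_1}h_{k-1}(s;\gamma_2,\ldots,\gamma_k)\,ds$ together with $\int_0^t e^{-s\alpha}\,ds=(1-e^{-t\alpha})/\alpha$ does produce each exponential $e^{-tS_p}$ with the stated coefficient after the reindexing you describe: the factor $1/S_p$ coming from the integration supplies precisely the missing $m=1$ factor $\sum_{i=1}^{p}\gamma_i$ in the first product, while the remaining factors match those of the shifted $(p-1)$-th coefficient because $\sum_{i=m}^{p-1}\gamma_{i+1}=\sum_{i=m+1}^{p}\gamma_i$ and $\sum_{j=p}^{n}\gamma_{j+1}=\sum_{j=p+1}^{n+1}\gamma_j$. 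The change of variables $u_j=t_j-t_{j+1}$ giving $h_k(\infty)=\prod_{n=1}^{k}S_n^{-1}$ is also correct. One small correction to your last sentence: it is not the two sides of \eqref{eq:auxiliarylemma} that are rational in the $\gamma_j$ (they contain exponentials), but rather their \emph{difference}, which by your previous step is a $t$-independent constant; that constant is indeed a rational function of the $\gamma_j$, and its vanishing on the open set $\{\mathrm{Re}\,S_p>0\}$ forces it to vanish identically, completing the argument.
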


Let us illustrate this Lemma by examples to clarify its meaning. In particular,
\begin{equation*}
	h_1(t; \gamma_1) = \frac{1 - e^{-\gamma_1 t}}{\gamma_1},
\end{equation*}
where for $ \gamma_1 = 0$ it should be understood as
\begin{equation*}
	h_1(t; 0) \equiv h_1(t; \gamma_1) |_{\gamma_1 \rightarrow 0} = t.
\end{equation*}
Similarly, we have
\begin{equation*}
	h_2(t; \gamma_1, \gamma_2) =
	\begin{cases}
		\frac{1}{\gamma_2}\left(\frac{1 - e^{-\gamma_1 t}}{\gamma_1} - \frac{1 - e^{-(\gamma_1 +\gamma_2)t}}{\gamma_1 +\gamma_2}\right), & \gamma_1 \neq 0, \gamma_2 \neq 0, \gamma_1 \neq - \gamma_2,\\
		- \frac{1- \gamma_2 t - e^{-\gamma_2 t}}{\gamma_2^2}, & \gamma_1 = 0, \gamma_2 \neq 0,\\
		- \frac{1- \gamma_2 t - e^{-\gamma_2 t}}{\gamma_2^2}, & \gamma_2 = - \gamma_1 \neq 0,\\
		\frac{1- (1 + \gamma_1 t) e^{-\gamma_1 t} }{\gamma_1^2}, & \gamma_2 = 0, \gamma_1 \neq 0,\\
		\frac{t^2}{2}, & \gamma_1 = \gamma_2 = 0.
	\end{cases}
\end{equation*}

The explicit purely algebraic expressions for perturbative expansions can be obtained by Theorem~\ref{th:masterEquation} along with the following theorem.
\begin{theorem}\label{th:explicitFormulae}
	For $\mathcal{L}(t)  = e^{- \mathfrak{L} t} \mathcal{L}$  formulae \eqref{eq:momentsDef} and \eqref{eq:comomentsDef} take the form
	\begin{align}
		\mathcal{M}_{k}(t) &=  \mathcal{P} \mathfrak{C} (e^{- t_0 \sum_{j=1}^k\mathfrak{L}_j} h_{k}(t-t_0; \mathfrak{L}_1, \ldots, \mathfrak{L}_k)   \mathcal{L}^{\otimes k})\mathcal{P}, \label{eq:explicitMk}\\
		\tilde{\mathcal{M}}_{k}(t) &=  \mathcal{P} \mathfrak{C} (e^{- t_0 \sum_{j=1}^k\mathfrak{L}_j} h_{k}(t-t_0; \mathfrak{L}_1, \ldots, \mathfrak{L}_k)   \mathcal{L}^{\otimes k})\mathcal{Q}, \label{eq:explicitMkt}\\
		\dot{\mathcal{M}}_{k}(t) &=  \mathcal{P} \mathfrak{C} (e^{ -t \mathfrak{L}_1} e^{- t_0 \sum_{j=2}^k\mathfrak{L}_j} h_{k}(t-t_0; \mathfrak{L}_2, \ldots, \mathfrak{L}_k)   \mathcal{L}^{\otimes k})\mathcal{P}, \label{eq:explicitMkd}\\
		\dot{\tilde{\mathcal{M}}}_{k}(t) &=  \mathcal{P} \mathfrak{C} (e^{ -t \mathfrak{L}_1} e^{- t_0 \sum_{j=2}^k\mathfrak{L}_j} h_{k}(t-t_0; \mathfrak{L}_2, \ldots, \mathfrak{L}_k)   \mathcal{L}^{\otimes k})\mathcal{Q}. \label{eq:explicitMkdt}
	\end{align}
\end{theorem}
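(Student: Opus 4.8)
The plan is to reduce all four identities to a single computation—the iterated integral of the commuting operator exponential $e^{-\sum_{j=1}^k t_j \mathfrak{L}_j}$—and then read off the moments, comoments, and their derivatives by choosing the trailing projector and by applying the fundamental theorem of calculus. First I would insert Lemma~\ref{lem:compressionLemma} into the definition \eqref{eq:momentsDef}, writing $\mathcal{P}\mathcal{L}(t_1)\cdots\mathcal{L}(t_k)\mathcal{P} = \mathcal{P}\,\mathfrak{C}(e^{-\sum_{j=1}^k t_j\mathfrak{L}_j}\mathcal{L}^{\otimes k})\,\mathcal{P}$. Because $\mathfrak{C}$ is linear and does not depend on the integration variables, and because $\mathcal{P}$ is constant, the nested integral can be moved inside $\mathfrak{C}$, so that only the operator-valued factor $e^{-\sum_{j=1}^k t_j\mathfrak{L}_j}$ is integrated over the simplex $t_0\le t_k\le\cdots\le t_1\le t$.

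The next step is to match this integral with the auxiliary function $h_k$ of Lemma~\ref{lemma:auxiliary}. Two observations make this immediate. The maps $\mathfrak{L}_1,\ldots,\mathfrak{L}_k$ act on distinct tensor slots, hence commute, so $e^{-\sum_{j}t_j\mathfrak{L}_j}=\prod_j e^{-t_j\mathfrak{L}_j}$ and every algebraic manipulation that proves the scalar identity \eqref{eq:auxiliarylemma} transcribes verbatim to these commuting operator arguments. Shifting each variable by $t_0$, i.e. setting $s_j=t_j-t_0$, turns the simplex $[t_0,\cdot]$ into $[0,\cdot]$ and, using commutativity, factors out $e^{-t_0\sum_{j=1}^k\mathfrak{L}_j}$, leaving exactly $h_k(t-t_0;\mathfrak{L}_1,\ldots,\mathfrak{L}_k)$ in the sense of the iterated-integral definition on the left-hand side of \eqref{eq:auxiliarylemma}. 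This yields \eqref{eq:explicitMk}; formula \eqref{eq:explicitMkt} is identical because \eqref{eq:comomentsDef} differs from \eqref{eq:momentsDef} only in the trailing $\mathcal{Q}$, which sits outside the integral.

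For the derivatives I would differentiate under $\mathfrak{C}$. The time $t$ enters $\mathcal{M}_k(t)$ and $\tilde{\mathcal{M}}_k(t)$ only as the upper limit of the outermost integral over $t_1$; by the fundamental theorem of calculus, $\tfrac{d}{dt}$ sets $t_1=t$ and removes that integration, so $e^{-t_1\mathfrak{L}_1}$ becomes $e^{-t\mathfrak{L}_1}$ and an iterated integral over $t_2,\ldots,t_k$ remains. Repeating the shift argument on the surviving integral factors out $e^{-t_0\sum_{j=2}^k\mathfrak{L}_j}$ and identifies the rest with $h(t-t_0;\mathfrak{L}_2,\ldots,\mathfrak{L}_k)$ (whose order equals the number of surviving integrations), giving \eqref{eq:explicitMkd} and \eqref{eq:explicitMkdt}, where again the trailing projector is the only difference between the two.

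The one point that needs care—and which I regard as the main obstacle—is the passage from scalar to operator arguments in Lemma~\ref{lemma:auxiliary}, since the closed form \eqref{eq:auxiliarylemma} contains denominators such as $\sum_{i=1}^p\mathfrak{L}_i$ that need not be invertible. I would sidestep this by treating $h_k(\cdot;\mathfrak{L}_1,\ldots,\mathfrak{L}_k)$ primarily through its defining iterated integral, which is a perfectly well-defined continuous operator-valued function in the finite-dimensional setting, and by invoking the closed form only on the commutative algebra generated by the $\mathfrak{L}_j$, where the ``zero-denominator as a limit'' prescription is precisely the functional-calculus limit along the commuting family. With that interpretation fixed, every manipulation above uses only linearity of $\mathfrak{C}$, commutativity of the $\mathfrak{L}_j$, and the fundamental theorem of calculus, and the four formulae follow.
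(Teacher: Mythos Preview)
Your argument is correct and follows essentially the same route as the paper: apply Lemma~\ref{lem:compressionLemma} to rewrite the integrand via $\mathfrak{C}$, shift all integration variables by $t_0$ to factor out $e^{-t_0\sum_j\mathfrak{L}_j}$, and invoke Lemma~\ref{lemma:auxiliary} on the commuting family $\mathfrak{L}_1,\ldots,\mathfrak{L}_k$ to identify the remaining simplex integral with $h_k$. Your treatment of the derivative formulae via the fundamental theorem of calculus and your remark on interpreting the operator-valued $h_k$ through its integral definition are more explicit than the paper's ``similarly,'' but the substance is the same.
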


\begin{proof}
	By definition \eqref{eq:momentsDef} we have
	\begin{align*}
		\mathcal{M}_{k}(t) &=  \int_{t_0}^t d t_1  \ldots  \int_{t_0}^{t_{k-1}} d t_k  \mathcal{P}  \mathcal{L}(t_1) \ldots \mathcal{L}(t_k)\mathcal{P} \\
		&=  \int_{0}^{t- t_0} d t_1  \ldots  \int_{0}^{t_{k-1}} d t_k  \mathcal{P}  \mathcal{L}(t_1 + t_0) \ldots \mathcal{L}(t_k +t_0)\mathcal{P} .
	\end{align*}
	Due to Lemma \ref{lem:compressionLemma} we obtain
	\begin{align*}
		\mathcal{M}_{k}(t) &=    \int_{0}^{t- t_0} d t_1  \ldots  \int_{0}^{t_{k-1}} d t_k   \mathcal{P} \mathfrak{C}(e^{- \sum_{j=1}^k(t_j +t_0)\mathfrak{L}_j}  \mathcal{L}^{\otimes k})\mathcal{P} \\
		&=    \mathcal{P} \mathfrak{C}( e^{- t_0 \sum_{j=1}^k\mathfrak{L}_j} \int_{0}^{t- t_0} d t_1  \ldots  \int_{0}^{t_{k-1}} d t_k   e^{- \sum_{j=1}^kt_j\mathfrak{L}_j}  \mathcal{L}^{\otimes k})\mathcal{P} .
	\end{align*}
	Using Lemma \ref{lemma:auxiliary} we obtain \eqref{eq:explicitMk}. Similarly, one could obtian \eqref{eq:explicitMkt}--\eqref{eq:explicitMkdt}.
\end{proof}

Let us explicitly write down the terms for $t_0 = 0$ which contribute to the second order expansion of Equation \eqref{eq:masterEquation}
\begin{align}
	&\mathcal{M}_{1}(t) =   \mathcal{P}\left( \frac{1 - e^{-\mathfrak{L} t}}{\mathfrak{L}}  \mathcal{L} \right)\mathcal{P},  \qquad
	\dot{\mathcal{M}}_{1}(t) =   \mathcal{P}\left(  e^{-\mathfrak{L} t} \mathcal{L} \right)\mathcal{P}, \label{eq:M1}\\
	&\tilde{\mathcal{M}}_{1}(t) =   \mathcal{P}\left(  \frac{1 - e^{-\mathfrak{L} t}}{\mathfrak{L}}  \mathcal{L} \right)\mathcal{Q},  \qquad
	\dot{\tilde{\mathcal{M}}}_{1}(t) =   \mathcal{P}\left(  e^{-\mathfrak{L} t} \mathcal{L} \right)\mathcal{Q}, \label{eq:Mt1}
	\\
	&\dot{\mathcal{M}}_{2}(t) =   \mathcal{P}( e^{ -t \mathfrak{L}} \mathcal{L}  ) \left(\frac{1 - e^{-\mathfrak{L} t}}{\mathfrak{L}}  \mathcal{L} \right)\mathcal{P}, \quad \dot{\tilde{\mathcal{M}}}_{2}(t) =   \mathcal{P}( e^{ -t \mathfrak{L}} \mathcal{L}  ) \left(\frac{1 - e^{-\mathfrak{L} t}}{\mathfrak{L}}  \mathcal{L} \right)\mathcal{Q}. \label{eq:M2}
\end{align}

Let us remark that 
\begin{equation*}
	\frac{1 - e^{-\mathfrak{L}_1 t}}{\mathfrak{L}_1} 
\end{equation*}
can be understood as a series
\begin{equation*}
	\frac{1 - e^{-\mathfrak{L}_1 t}}{\mathfrak{L}_1} = \sum_{j=0}^{\infty} (-1)^{j+1} \frac{t^j}{j!} \mathfrak{L}_1^j,
\end{equation*}
so it is defined for degenerate map $\mathfrak{L}_1$. Moreover, one could write
\begin{equation}\label{eq:inverseIndep}
	\frac{1 - e^{-\mathfrak{L}_1 t}}{\mathfrak{L}_1}  = (1 - e^{-\mathfrak{L}_1 t})(\mathfrak{L}_1)^{(-1)},
\end{equation}
where $(\mathfrak{L}_1)^{(-1)}$ can be understood similarly to \eqref{eq:pseudoInverse} as inverse on the image of $\mathfrak{L}_1$ and zero otherwise. But due to the fact that the kernel $ 1 - e^{-\mathfrak{L}_1 t} $ includes the kernel of $\mathfrak{L}_1$ one could interpret $(\mathfrak{L}_1)^{(-1)} \mathcal{A}$ for a superoperator $ \mathcal{A} $  as any solution $\mathcal{X}$ of the equation $\mathfrak{L}_1 \mathcal{X} = \mathcal{A} $. Despite the ambiguity of such a solution for the case $  \mathfrak{L} = [\mathcal{L}_0, \; \cdot \;] $ the whole result $(1 - e^{-\mathfrak{L}_1 t})(\mathfrak{L}_1)^{(-1)} \mathcal{A}$ will not depend on it.

Similarly, we have for example
\begin{equation*}
	\mathcal{M}_{2}(t) =   \mathcal{P}\left( e^{- t_0 (\mathfrak{L}_1 + \mathfrak{L}_2)}\left(\frac{1 - e^{-\mathfrak{L}_1 t}}{\mathfrak{L}_1} - \frac{1 - e^{-(\mathfrak{L}_1 +\mathfrak{L}_2)t}}{\mathfrak{L}_1 +\mathfrak{L}_2}\right) \frac{1}{\mathfrak{L}_2} \mathcal{L} \otimes \mathcal{L} \right) \mathcal{P}.
\end{equation*}

Let us write down the coefficients of the second order asymptotic expansion of the time-convolutionless master equaiton in an even more explicit form.
\begin{corollary}
	For $t_0=0$ and  $\mathcal{L}(t) = e^{- \mathcal{L}_0 t}\mathcal{L} e^{\mathcal{L}_0 t}$ one has
	\begin{equation}\label{eq:K12}
		\mathcal{K}_{1}(t) =\mathcal{P}  \mathcal{L}(t) \mathcal{P}, \qquad \mathcal{K}_{2}(t) =\mathcal{P} \mathcal{L}(t) \mathcal{Q} [\mathcal{L}_0, \; \cdot \;]^{(-1)} (\mathcal{L} - \mathcal{L}(t))\mathcal{P},
	\end{equation}
	\begin{equation}\label{eq:I12}
		\mathcal{I}_{1}(t) =\mathcal{P}  \mathcal{L}(t) \mathcal{Q}, \qquad \mathcal{I}_{2}(t) =  \mathcal{P} \mathcal{L}(t) \mathcal{Q} [\mathcal{L}_0, \; \cdot \;]^{(-1)} (\mathcal{L} - \mathcal{L}(t))\mathcal{Q}.
	\end{equation}
\end{corollary}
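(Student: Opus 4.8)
The plan is to reduce each of the four asserted identities to the corresponding second-order instance of the general formulae in Theorem~\ref{th:masterEquation} and then substitute the explicit $t_0=0$ expressions already recorded in \eqref{eq:M1}--\eqref{eq:M2}. Concretely, \eqref{eq:K1M1K2M2} gives $\mathcal{K}_1(t)=\dot{\mathcal{M}}_1(t)$ and $\mathcal{K}_2(t)=\dot{\mathcal{M}}_2(t)-\dot{\mathcal{M}}_1(t)\mathcal{M}_1(t)$, while the $n=1,2$ cases of the $\mathcal{I}_n$ formula yield $\mathcal{I}_1(t)=\dot{\tilde{\mathcal{M}}}_1(t)$ and $\mathcal{I}_2(t)=\dot{\tilde{\mathcal{M}}}_2(t)-\dot{\mathcal{M}}_1(t)\tilde{\mathcal{M}}_1(t)$. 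The first-order claims are then immediate: since $\mathcal{L}(t)=e^{-\mathfrak{L}t}\mathcal{L}$, the expressions in \eqref{eq:M1} and \eqref{eq:Mt1} give directly $\dot{\mathcal{M}}_1(t)=\mathcal{P}\mathcal{L}(t)\mathcal{P}$ and $\dot{\tilde{\mathcal{M}}}_1(t)=\mathcal{P}\mathcal{L}(t)\mathcal{Q}$, which are precisely the left-hand formulae of \eqref{eq:K12} and \eqref{eq:I12}.

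The central step, which I would isolate explicitly, is the algebraic identity
\begin{equation*}
	\frac{1-e^{-\mathfrak{L}t}}{\mathfrak{L}}\,\mathcal{L}=[\mathcal{L}_0,\;\cdot\;]^{(-1)}\bigl(\mathcal{L}-\mathcal{L}(t)\bigr).
\end{equation*}
This follows by writing the left-hand side as $(1-e^{-\mathfrak{L}t})(\mathfrak{L})^{(-1)}\mathcal{L}$ as in \eqref{eq:inverseIndep}, using that $(\mathfrak{L})^{(-1)}$ commutes with the entire function $1-e^{-\mathfrak{L}t}$ of $\mathfrak{L}$ (both respect the image/kernel splitting of $\mathfrak{L}$, and $1-e^{-\mathfrak{L}t}$ annihilates $\ker\mathfrak{L}$), and then applying $(1-e^{-\mathfrak{L}t})\mathcal{L}=\mathcal{L}-e^{-\mathfrak{L}t}\mathcal{L}=\mathcal{L}-\mathcal{L}(t)$ together with $\mathfrak{L}=[\mathcal{L}_0,\;\cdot\;]$. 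With this in hand, the common factor $\tfrac{1-e^{-\mathfrak{L}t}}{\mathfrak{L}}\mathcal{L}$ appearing in $\mathcal{M}_1$, $\tilde{\mathcal{M}}_1$, $\dot{\mathcal{M}}_2$ and $\dot{\tilde{\mathcal{M}}}_2$ is rewritten uniformly as $[\mathcal{L}_0,\;\cdot\;]^{(-1)}(\mathcal{L}-\mathcal{L}(t))$.

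It then remains to assemble the second-order coefficients. For $\mathcal{K}_2(t)$ I would substitute to obtain $\dot{\mathcal{M}}_2(t)=\mathcal{P}\mathcal{L}(t)[\mathcal{L}_0,\;\cdot\;]^{(-1)}(\mathcal{L}-\mathcal{L}(t))\mathcal{P}$ and $\dot{\mathcal{M}}_1(t)\mathcal{M}_1(t)=\mathcal{P}\mathcal{L}(t)\mathcal{P}[\mathcal{L}_0,\;\cdot\;]^{(-1)}(\mathcal{L}-\mathcal{L}(t))\mathcal{P}$, where the adjacent projectors are merged via $\mathcal{P}^2=\mathcal{P}$. Subtracting and factoring out the common left factor $\mathcal{P}\mathcal{L}(t)$ and the common right factor $(\mathcal{L}-\mathcal{L}(t))\mathcal{P}$ leaves the middle factor $I-\mathcal{P}=\mathcal{Q}$, giving \eqref{eq:K12}. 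The computation of $\mathcal{I}_2(t)$ is word-for-word identical except that the trailing projector is $\mathcal{Q}$ in place of $\mathcal{P}$, in accordance with the remark following Theorem~\ref{th:masterEquation}, and produces \eqref{eq:I12}.

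The main point of care, rather than a genuine obstacle, is the bookkeeping around the pseudoinverse $[\mathcal{L}_0,\;\cdot\;]^{(-1)}$: because $\mathfrak{L}$ is typically degenerate, $(\mathfrak{L})^{(-1)}$ is defined only up to $\ker\mathfrak{L}$, so one must keep the block $(1-e^{-\mathfrak{L}t})(\mathfrak{L})^{(-1)}$, equivalently $\mathcal{L}-\mathcal{L}(t)$, intact throughout, as stressed after \eqref{eq:inverseIndep}, and never split off a bare $(\mathfrak{L})^{(-1)}\mathcal{L}$. Carrying $\mathcal{L}-\mathcal{L}(t)$ as a single factor guarantees that the final expressions are unambiguous.
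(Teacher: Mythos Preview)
Your proposal is correct and follows essentially the same route as the paper: both start from \eqref{eq:K1M1K2M2} and \eqref{eq:M1}--\eqref{eq:M2}, factor $\mathcal{P}\mathcal{L}(t)$ and $(\mathcal{L}-\mathcal{L}(t))\mathcal{P}$ (resp.\ $\mathcal{Q}$) from the difference $\dot{\mathcal{M}}_2-\dot{\mathcal{M}}_1\mathcal{M}_1$ to produce the middle $\mathcal{Q}=I-\mathcal{P}$, and rewrite $\tfrac{1-e^{-\mathfrak{L}t}}{\mathfrak{L}}\mathcal{L}$ as $[\mathcal{L}_0,\,\cdot\,]^{(-1)}(\mathcal{L}-\mathcal{L}(t))$. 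Your treatment is in fact slightly more explicit than the paper's about the commutation of $(\mathfrak{L})^{(-1)}$ with $1-e^{-\mathfrak{L}t}$ and the need to keep $\mathcal{L}-\mathcal{L}(t)$ intact, which is a welcome clarification rather than a departure.
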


\begin{proof}
	Due to formulae \eqref{eq:K1M1K2M2} and \eqref{eq:M1}-\eqref{eq:M2} for $  \mathfrak{L} = [\mathcal{L}_0, \; \cdot \;] $ we have
	\begin{align*}
		\mathcal{K}_{1}(t) =&  \mathcal{P}\left(  e^{-\mathfrak{L} t} \mathcal{L} \right)\mathcal{P}=  \mathcal{P}  \mathcal{L}(t) \mathcal{P} , \\
		\mathcal{K}_{2}(t) = & \mathcal{P}( e^{ -t \mathfrak{L}} \mathcal{L}  ) \left(\frac{1 - e^{-\mathfrak{L} t}}{\mathfrak{L}}  \mathcal{L} \right)\mathcal{P} -   \mathcal{P}( e^{ -t \mathfrak{L}} \mathcal{L}  )\mathcal{P} \left(\frac{1 - e^{-\mathfrak{L} t}}{\mathfrak{L}}  \mathcal{L} \right)\mathcal{P}\\
		=&\mathcal{P} \mathcal{L}(t) \mathcal{Q} [\mathcal{L}_0, \; \cdot \;]^{(-1)} (\mathcal{L} - \mathcal{L}(t))\mathcal{P}.
	\end{align*}
	Thus, we obtain \eqref{eq:K12}. Formulae \eqref{eq:I12} can be obtained similarly.
\end{proof}

\section{Relaxation conditions and Bogolubov-van Hove perturbation theory}\label{sec:relaxCond}

In this section we will analyze the long-time behavior of the projected dynamics.   Eigenvalues of GKSL generator $\mathcal{L}_0$ always have non-positive real parts \cite[p. 58]{Alicki2007}. Thus, in the generic case the real parts of the eigenvalues become strictly negative. So the limit $\lim\limits_{t \rightarrow +\infty}e^{\mathcal{L}_0 t} $ exists. But Theorem \ref{th:explicitFormulae} expresses the perturbative expansions for the time-convolutionless equations in terms of $e^{-[\mathcal{L}_0, \; \cdot \;]  t } =  e^{-\mathcal{L}_0 t} \; \cdot \;  e^{\mathcal{L}_0 t}$ which has no limit for $ t \rightarrow +\infty $ in the generic case. So to guarantee the well-define long-time behavior one should assume that terms $e^{-\mathcal{L}_0 t}$, which have exponentially growing contributions, are canceled. To formalize it we introduce relaxation conditions and enhanced relaxation conditions above. Let us for simplicity assume from now on that $t_0 =0$.

\begin{definition}
	Let the following limit
	\begin{equation}\label{eq:limitLambda}
		\lim\limits_{t \rightarrow +\infty}e^{\mathcal{L}_0 t} = \Lambda
	\end{equation}
	exist and for any $k = 1, \ldots, K$ the following equality  be satisfied
	\begin{equation}\label{eq:relaxCond}
		\mathcal{P} e^{-\mathcal{L}_0 t_k} \mathcal{L} e^{\mathcal{L}_0 (t_k- t_{k-1})} \ldots  e^{\mathcal{L}_0 (t_2 -t_1)} \mathcal{L}  e^{\mathcal{L}_0 t_1}\mathcal{P} =  \mathcal{P}  \mathcal{L} e^{\mathcal{L}_0 (t_k- t_{k-1})} \ldots  e^{\mathcal{L}_0 (t_2 -t_1)} \mathcal{L}  e^{\mathcal{L}_0 t_1}\mathcal{P},
	\end{equation}
	then we say that the \textbf{\textit{relaxation conditions}} of order $K$ are satisfied.
\end{definition}

That is the relaxation conditions just mean that first $e^{-\mathcal{L}_0 t_k}$in the expression \eqref{eq:relaxCond} could be omitted   $	\mathcal{P} e^{-\mathcal{L}_0 t_k} \ldots \mathcal{P} = 	\mathcal{P} \ldots \mathcal{P}$. Satisfaction of equality $	\mathcal{P} e^{-\mathcal{L}_0 t_k}  = \mathcal{P} $ is the most  simple and natural situation, when this is the case. In Section \ref{sec:example} we provide an explicit example, where such a condition holds.

\begin{definition}
	If the relaxation conditions are satisfied and in addition
	\begin{equation*}
		\mathcal{P} e^{-\mathcal{L}_0 t_k} \mathcal{L} e^{\mathcal{L}_0 (t_k- t_{k-1})} \ldots  e^{\mathcal{L}_0 (t_2 -t_1)} \mathcal{L}  e^{\mathcal{L}_0 t_1}\mathcal{Q} =  \mathcal{P}  \mathcal{L} e^{\mathcal{L}_0 (t_k- t_{k-1})} \ldots  e^{\mathcal{L}_0 (t_2 -t_1)} \mathcal{L}  e^{\mathcal{L}_0 t_1}\mathcal{Q}
	\end{equation*}
	are satisfied, then we say that \textbf{\textit{enhanced relaxation conditions}} of order $K$ are satisfied.
\end{definition}

\begin{lemma}\label{lem:underRel}
	Under relaxation conditions of order $2$ one has
	\begin{align}
		\mathcal{K}_{1}(t) &= 	\mathcal{P}  \mathcal{L} e^{\mathcal{L}_0 t} \mathcal{P}, \label{eq:K1underRel} \\
		\mathcal{K}_{2}(t) &=  \mathcal{P} \mathcal{L} ([e^{\mathcal{L}_0 t} , \; \cdot \;] [\mathcal{L}_0, \; \cdot \; ]^{(-1)} \mathcal{L})   \mathcal{P} 	- \mathcal{P}  \mathcal{L}  e^{\mathcal{L}_0 t}  \mathcal{P} ( [\mathcal{L}_0, \; \cdot \; ]^{(-1)} \mathcal{L})  (1-  e^{\mathcal{L}_0 t} ) \mathcal{P}. \label{eq:K2underRel}
	\end{align}
	Under enhanced relaxation conditions of order $2$ one additionally has
	\begin{align}
		\mathcal{I}_{1}(t) &= 	\mathcal{P}  \mathcal{L} e^{\mathcal{L}_0 t} \mathcal{Q},
		\label{eq:I1underRel}
		\\
		\mathcal{I}_{2}(t) &=  \mathcal{P} \mathcal{L} ([e^{\mathcal{L}_0 t} , \; \cdot \;] [\mathcal{L}_0, \; \cdot \; ]^{(-1)} \mathcal{L})   \mathcal{Q} 	- \mathcal{P}  \mathcal{L}  e^{\mathcal{L}_0 t}  \mathcal{P} ( [\mathcal{L}_0, \; \cdot \; ]^{(-1)} \mathcal{L})  (1-  e^{\mathcal{L}_0 t} ) \mathcal{Q}.
		\label{eq:I2underRel}
	\end{align}
\end{lemma}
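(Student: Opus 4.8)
The plan is to read off \eqref{eq:K1underRel}--\eqref{eq:I2underRel} from the second-order coefficients already isolated in the Corollary, eqs.~\eqref{eq:K12} and \eqref{eq:I12}, and then to strip the factors $e^{-\mathcal{L}_0 t}$ using the (enhanced) relaxation conditions. Write $\mathcal{X} \equiv [\mathcal{L}_0, \, \cdot \,]^{(-1)} \mathcal{L}$, the superoperator fixed by $[\mathcal{L}_0, \mathcal{X}] = \mathcal{L}$. The one identity that organises everything is that, because $\mathcal{L}_0$ commutes with $e^{\pm \mathcal{L}_0 t}$,
\[
\mathcal{L} - \mathcal{L}(t) = [\mathcal{L}_0, \, \mathcal{X} - e^{-\mathcal{L}_0 t} \mathcal{X} e^{\mathcal{L}_0 t}], \qquad \text{so} \qquad [\mathcal{L}_0, \, \cdot \,]^{(-1)} (\mathcal{L} - \mathcal{L}(t)) = \mathcal{X} - e^{-\mathcal{L}_0 t} \mathcal{X} e^{\mathcal{L}_0 t}.
\]
Substituting this together with $\mathcal{L}(t) = e^{-\mathcal{L}_0 t} \mathcal{L} e^{\mathcal{L}_0 t}$ turns the compact Corollary expressions into explicit products of $e^{\pm \mathcal{L}_0 t}$, $\mathcal{L}$ and $\mathcal{X}$.

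The degree-one identities are immediate: $\mathcal{K}_1(t) = \mathcal{P} \mathcal{L}(t) \mathcal{P} = \mathcal{P} e^{-\mathcal{L}_0 t} \mathcal{L} e^{\mathcal{L}_0 t} \mathcal{P}$, and the relaxation condition of order $1$ deletes the leading $e^{-\mathcal{L}_0 t}$ to give \eqref{eq:K1underRel}; replacing the right $\mathcal{P}$ by $\mathcal{Q}$ and using the enhanced condition of order $1$ gives \eqref{eq:I1underRel}. For $\mathcal{K}_2$ I would insert the identity above into \eqref{eq:K12}, write $\mathcal{Q} = I - \mathcal{P}$, and split into a ``connected'' piece $\mathcal{P}\mathcal{L}(t)(\ldots)\mathcal{P}$ and a ``disconnected'' piece $-\mathcal{P}\mathcal{L}(t)\mathcal{P}(\ldots)\mathcal{P}$. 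For the connected piece it is cleanest to note that it is exactly $\dot{\mathcal{M}}_2(t) = \int_0^t \mathcal{P}\mathcal{L}(t)\mathcal{L}(s)\mathcal{P}\,ds$, apply the order-$2$ relaxation condition to the integrand (removing its leading $e^{-\mathcal{L}_0 t}$), and use
\[
\int_0^t e^{\mathcal{L}_0(t-s)}\mathcal{L} e^{\mathcal{L}_0 s}\,ds = e^{\mathcal{L}_0 t}\mathcal{X} - \mathcal{X} e^{\mathcal{L}_0 t} = [e^{\mathcal{L}_0 t}, \, \cdot \,]\mathcal{X},
\]
which produces precisely the first term of \eqref{eq:K2underRel}. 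The disconnected piece is simplified by the order-$1$ condition and, after recombining, yields the second term $-\mathcal{P}\mathcal{L}e^{\mathcal{L}_0 t}\mathcal{P}\,\mathcal{X}(1 - e^{\mathcal{L}_0 t})\mathcal{P}$. Formulas \eqref{eq:I1underRel}--\eqref{eq:I2underRel} then follow by the identical manipulations, with the rightmost $\mathcal{P}$ replaced by $\mathcal{Q}$ and the enhanced conditions invoked at each step.

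The step I expect to fight with is stripping the $e^{-\mathcal{L}_0 t}$ that, after substitution, lands immediately to the right of the interior projector in the disconnected term, i.e.\ replacing $\mathcal{P}\mathcal{L}e^{\mathcal{L}_0 t}\mathcal{P}\,e^{-\mathcal{L}_0 t}\mathcal{X}e^{\mathcal{L}_0 t}\mathcal{P}$ by $\mathcal{P}\mathcal{L}e^{\mathcal{L}_0 t}\mathcal{P}\,\mathcal{X}e^{\mathcal{L}_0 t}\mathcal{P}$ (and likewise for $\mathcal{I}_2$). This factor does not head a pure $\mathcal{L}$-string, so it is not covered by the relaxation conditions in their literal form: the discrepancy between the two expressions carries the factor $\mathcal{P}(e^{-\mathcal{L}_0 t} - I)$, whose derivative at $t = 0$ is $-\mathcal{P}\mathcal{L}_0$. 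I would therefore discharge this step with the sufficient condition $\mathcal{P} e^{-\mathcal{L}_0 t} = \mathcal{P}$ singled out after the definition of the relaxation conditions; differentiation gives $\mathcal{P}\mathcal{L}_0 = 0$, every interior factor $e^{-\mathcal{L}_0 t}$ adjacent to a projector then disappears uniformly, and the stated forms \eqref{eq:K1underRel}--\eqref{eq:I2underRel} drop out.
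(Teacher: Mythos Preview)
Your proposal is correct and follows essentially the same route as the paper: start from \eqref{eq:K12}--\eqref{eq:I12}, substitute $\mathcal{L}(t)=e^{-\mathcal{L}_0 t}\mathcal{L}e^{\mathcal{L}_0 t}$ and $[\mathcal{L}_0,\,\cdot\,]^{(-1)}(\mathcal{L}-\mathcal{L}(t))=\mathcal{X}-e^{-\mathcal{L}_0 t}\mathcal{X}e^{\mathcal{L}_0 t}$, expand $\mathcal{Q}=I-\mathcal{P}$ into four terms, strip the leading $e^{-\mathcal{L}_0 t}$'s, and regroup into the commutator form. Your treatment of the connected piece via the integral $\int_0^t e^{\mathcal{L}_0(t-s)}\mathcal{L}e^{\mathcal{L}_0 s}\,ds=[e^{\mathcal{L}_0 t},\,\cdot\,]\mathcal{X}$ is exactly how the first term of \eqref{eq:K2underRel} arises.

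Your worry about the disconnected piece is well placed and in fact sharper than the paper's own argument. The paper simply passes from $\mathcal{P}e^{-\mathcal{L}_0 t}\mathcal{L}e^{\mathcal{L}_0 t}\mathcal{P}\,e^{-\mathcal{L}_0 t}\mathcal{X}e^{\mathcal{L}_0 t}\mathcal{P}$ to $\mathcal{P}\mathcal{L}e^{\mathcal{L}_0 t}\mathcal{P}\,\mathcal{X}e^{\mathcal{L}_0 t}\mathcal{P}$ without comment, i.e.\ it strips the interior $e^{-\mathcal{L}_0 t}$ as well; equivalently it replaces $\mathcal{M}_1(t)=\mathcal{P}(\mathcal{X}-e^{-\mathcal{L}_0 t}\mathcal{X}e^{\mathcal{L}_0 t})\mathcal{P}$ by $\mathcal{P}\mathcal{X}(1-e^{\mathcal{L}_0 t})\mathcal{P}$. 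This step is not covered by the literal relaxation condition \eqref{eq:relaxCond}, which speaks only of $\mathcal{L}$-chains, and the paper tacitly relies on the sufficient condition $\mathcal{P}e^{-\mathcal{L}_0 t}=\mathcal{P}$ that it singles out right after the definition. Your proposed resolution---invoke that stronger hypothesis---is therefore exactly what the paper's proof is implicitly doing; you have just made the dependence explicit.
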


\begin{proof}
	
	Under relaxation of order $2$ formulae \eqref{eq:K12} take the form
	\begin{equation*}
		\mathcal{K}_{1}(t) =	\mathcal{P}  \mathcal{L} e^{\mathcal{L}_0 t} \mathcal{P}
	\end{equation*}
	and
	\begin{align*}
		\mathcal{K}_{2}(t) =& - \mathcal{P} e^{-\mathcal{L}_0 t} \mathcal{L}  e^{\mathcal{L}_0 t}  \mathcal{Q}  e^{-\mathcal{L}_0 t} ( [\mathcal{L}_0, \; \cdot \; ]^{(-1)} \mathcal{L})  e^{\mathcal{L}_0 t}  \mathcal{P} 
		+ \mathcal{P} e^{-\mathcal{L}_0 t} \mathcal{L}  e^{\mathcal{L}_0 t}  \mathcal{Q} ( [\mathcal{L}_0, \; \cdot \; ]^{(-1)} \mathcal{L})   \mathcal{P} \\
		=& - \mathcal{P} e^{-\mathcal{L}_0 t} \mathcal{L} ( [\mathcal{L}_0, \; \cdot \; ]^{(-1)} \mathcal{L})  e^{\mathcal{L}_0 t}  \mathcal{P} +  \mathcal{P} e^{-\mathcal{L}_0 t} \mathcal{L}  e^{\mathcal{L}_0 t}  \mathcal{P}  e^{-\mathcal{L}_0 t} ( [\mathcal{L}_0, \; \cdot \; ]^{(-1)} \mathcal{L})  e^{\mathcal{L}_0 t}  \mathcal{P}\\
		&+ \mathcal{P} e^{-\mathcal{L}_0 t} \mathcal{L}  e^{\mathcal{L}_0 t}  ( [\mathcal{L}_0, \; \cdot \; ]^{(-1)} \mathcal{L})   \mathcal{P}
		- \mathcal{P} e^{-\mathcal{L}_0 t} \mathcal{L}  e^{\mathcal{L}_0 t}  \mathcal{P} ( [\mathcal{L}_0, \; \cdot \; ]^{(-1)} \mathcal{L})   \mathcal{P} \\
		=&  -\mathcal{P} \mathcal{L} ( [\mathcal{L}_0, \; \cdot \; ]^{(-1)} \mathcal{L})  e^{\mathcal{L}_0 t}  \mathcal{P} +  \mathcal{P} \mathcal{L}  e^{\mathcal{L}_0 t}  \mathcal{P}  ( [\mathcal{L}_0, \; \cdot \; ]^{(-1)} \mathcal{L})  e^{\mathcal{L}_0 t}  \mathcal{P}\\
		&+ \mathcal{P}\mathcal{L}  e^{\mathcal{L}_0 t}  ( [\mathcal{L}_0, \; \cdot \; ]^{(-1)} \mathcal{L})   \mathcal{P}
		- \mathcal{P}  \mathcal{L}  e^{\mathcal{L}_0 t}  \mathcal{P} ( [\mathcal{L}_0, \; \cdot \; ]^{(-1)} \mathcal{L})   \mathcal{P}\\
		=& \mathcal{P} \mathcal{L} ([e^{\mathcal{L}_0 t} , \; \cdot \;] [\mathcal{L}_0, \; \cdot \; ]^{(-1)} \mathcal{L})   \mathcal{P} 	- \mathcal{P}  \mathcal{L}  e^{\mathcal{L}_0 t}  \mathcal{P} ( [\mathcal{L}_0, \; \cdot \; ]^{(-1)} \mathcal{L})  (1-  e^{\mathcal{L}_0 t} ) \mathcal{P}.
	\end{align*}
	Thus, we have obtained \eqref{eq:K1underRel} and \eqref{eq:K2underRel}. 
	Similarly, under enhanced relaxation conditions of order 2 formulae \eqref{eq:I12} take the form \eqref{eq:I1underRel}-\eqref{eq:I2underRel}.
\end{proof}

Remark that for higher order terms  $\mathcal{K}_k(t)$ and  $\mathcal{I}_k(t)$ the relaxation conditions and enhanced relaxations, respectively, also lead to expressions which have no terms of the form $ e^{-\mathcal{L}_0 t}$. This allows us to calculate the limits of such expressions at $ t \rightarrow + \infty$ guaranteeing their existence.

\begin{lemma}\label{lem:underRelInf}
	Under relaxation conditions of order 2 one has
	\begin{align}
		\mathcal{K}_{1}(+\infty) &= \mathcal{P}  \mathcal{L} \Lambda \mathcal{P},\\
		\mathcal{K}_{2}(+\infty) &= \mathcal{P} \mathcal{L} ([\Lambda , \; \cdot \;] [\mathcal{L}_0, \; \cdot \; ]^{(-1)} \mathcal{L})   \mathcal{P} 	- \mathcal{P}  \mathcal{L}  \Lambda \mathcal{P} ( [\mathcal{L}_0, \; \cdot \; ]^{(-1)} \mathcal{L})  (1-  \Lambda ) \mathcal{P}. \label{eq:K2inf}
	\end{align}
	Under enhanced relaxation conditions of order $2$ one has
	\begin{align}
		\mathcal{I}_{1}(+\infty) &= \mathcal{P}  \mathcal{L} \Lambda \mathcal{Q},\\
		\mathcal{I}_{2}(+\infty) &= \mathcal{P} \mathcal{L} ([\Lambda , \; \cdot \;] [\mathcal{L}_0, \; \cdot \; ]^{(-1)} \mathcal{L})   \mathcal{Q}	- \mathcal{P}  \mathcal{L}  \Lambda \mathcal{P} ( [\mathcal{L}_0, \; \cdot \; ]^{(-1)} \mathcal{L})  (1-  \Lambda ) \mathcal{Q}.\label{eq:I2inf}
	\end{align}
\end{lemma}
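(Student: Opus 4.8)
The plan is to pass to the limit $t \to +\infty$ directly in the explicit expressions for $\mathcal{K}_1(t)$, $\mathcal{K}_2(t)$, $\mathcal{I}_1(t)$, $\mathcal{I}_2(t)$ already furnished by Lemma \ref{lem:underRel}. The structural feature I would exploit is that, under the relaxation conditions, those expressions contain the superoperator $e^{\mathcal{L}_0 t}$ only with a positive sign in the exponent: every potentially exponentially growing factor $e^{-\mathcal{L}_0 t}$ has been absorbed through the relaxation identity \eqref{eq:relaxCond} when deriving \eqref{eq:K1underRel}--\eqref{eq:K2underRel}. Since all maps act on a finite-dimensional space, the assumed existence of the limit \eqref{eq:limitLambda}, namely $e^{\mathcal{L}_0 t} \to \Lambda$, then immediately guarantees that each summand converges.

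First I would note that the fixed superoperator $[\mathcal{L}_0,\,\cdot\,]^{(-1)}$ and the generator $\mathcal{L}$ carry no $t$-dependence, so the only $t$-dependent objects are the factors $e^{\mathcal{L}_0 t}$ and the commutator map $[e^{\mathcal{L}_0 t},\,\cdot\,]$. For the plain factors I simply substitute $\Lambda$. For the commutator map I would invoke that $\mathcal{A} \mapsto [\mathcal{A},\,\cdot\,]$ is linear and therefore continuous in finite dimension, so $[e^{\mathcal{L}_0 t},\,\cdot\,] \to [\Lambda,\,\cdot\,]$; composition of superoperators being jointly continuous as well, the limit may be taken factor by factor inside each product.

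Applying this to \eqref{eq:K1underRel} yields $\mathcal{K}_1(+\infty) = \mathcal{P}\mathcal{L}\Lambda\mathcal{P}$, and applying it to \eqref{eq:K2underRel} yields \eqref{eq:K2inf} after replacing every occurrence of $e^{\mathcal{L}_0 t}$ by $\Lambda$, including the one inside the factor $1 - e^{\mathcal{L}_0 t}$ and the one inside the commutator $[e^{\mathcal{L}_0 t},\,\cdot\,]$. The inhomogeneous coefficients follow identically: under the enhanced relaxation conditions, the expressions \eqref{eq:I1underRel}--\eqref{eq:I2underRel} differ from the homogeneous ones only by a trailing $\mathcal{Q}$ in place of $\mathcal{P}$, so the same substitution gives \eqref{eq:I2inf}.

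The computation is essentially mechanical once Lemma \ref{lem:underRel} is in hand. The single point that genuinely requires the relaxation hypotheses is the absence of any $e^{-\mathcal{L}_0 t}$ term, which is precisely what makes the limit finite at all; this was already discharged in Lemma \ref{lem:underRel}. Consequently the hard part will not be in this lemma but lies upstream, and here one needs only to confirm that no residual $t$-dependence survives and that the commutator map commutes with passage to the limit, both of which are immediate in the finite-dimensional setting.
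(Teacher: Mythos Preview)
Your argument is correct and matches the paper's own approach exactly: the paper simply states that the lemma follows immediately from Lemma~\ref{lem:underRel} and condition~\eqref{eq:limitLambda}. Your additional remarks on continuity of the commutator map and joint continuity of composition in finite dimensions are sound elaborations of this one-line justification.
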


This lemma follows immediately from Lemma \ref{lem:underRel} and condition \eqref{eq:limitLambda}.

The next theorem allows one to calculate the time-convolutionless master equations on the Bogo\-lubov-van Hove timescale $t = O(\lambda^{-2})$. From the physical point of view it is the timescale larger than bath correlation time. But in our setup the bath is another subsystem of our open composite system.

\begin{theorem}\label{th:linearOrderEq}
	Let the enhanced relaxation conditions of any order be satisfied and   $ \mathcal{K}_1(+\infty) =  \mathcal{I}_1(+\infty) = \mathcal{K}_3(+\infty) =  \mathcal{I}_3(+\infty) = 0 $, then for $t = O(\lambda^{-2})$, $\lambda \rightarrow 0$ one has
	\begin{equation}\label{eq:linearOrderSol}
		\mathcal{P} \rho(t) = e^{ \lambda^2 \mathcal{K}_{2}(+\infty) t} \mathcal{R} \rho(0) +\frac{ e^{ \lambda^2 \mathcal{K}_{2}(+\infty) t}  -1}{\mathcal{K}_{2}(+\infty)} \mathcal{I}_{2}(+\infty) \mathcal{Q} \rho(0) + O(\lambda^2)
	\end{equation}
	with 
	\begin{equation}\label{eq:renormailzationMap}
		\mathcal{R} \rho =\mathcal{P}  (1+ \lambda  \mathcal{L}  \mathcal{L}_0^{(-1)} (\Lambda - I) )  \rho,
	\end{equation}
	where $\mathcal{K}_{2}(+\infty)$ and $\mathcal{I}_{2}(+\infty)$ can be defined by formulae \eqref{eq:K2inf} and \eqref{eq:I2inf}, respectively.
\end{theorem}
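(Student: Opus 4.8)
\emph{The plan} is to solve the time-convolutionless equation~\eqref{eq:masterEquation} by variation of constants and then extract its leading and first-order behaviour on the slow timescale $\tau = \lambda^2 t = O(1)$. Writing $\Phi(t,s)$ for the propagator of the homogeneous part $\frac{d}{dt}X = \mathcal{K}(t)X$, the exact solution reads
\[
\mathcal{P}\rho(t) = \Phi(t,0)\,\mathcal{P}\rho(0) + \int_0^t \Phi(t,s)\,\mathcal{I}(s)\,\mathcal{Q}\rho(0)\,ds .
\]
The whole argument rests on a separation of timescales: under the (enhanced) relaxation conditions Lemma~\ref{lem:underRel} shows that each $\mathcal{K}_n(t)$ and $\mathcal{I}_n(t)$ tends to $\mathcal{K}_n(+\infty)$, $\mathcal{I}_n(+\infty)$, and, since the explicit formulae contain only $e^{\mathcal{L}_0 t}$ (which converges to $\Lambda$ exponentially), the deviations $\mathcal{K}_n(t)-\mathcal{K}_n(+\infty)$ and $\mathcal{I}_n(t)-\mathcal{I}_n(+\infty)$ decay on the fast $O(1)$ bath-correlation timescale. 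The hypotheses $\mathcal{K}_1(+\infty)=\mathcal{I}_1(+\infty)=0$ then make $\mathcal{K}_1(t)$ and $\mathcal{I}_1(t)$ themselves fast-decaying.

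First I would treat the homogeneous propagator by factoring out the slow flow, $\Phi(t,0)=e^{\lambda^2\mathcal{K}_2(+\infty)t}\Psi(t)$, which turns the defining equation into $\dot\Psi = e^{-\lambda^2\mathcal{K}_2(+\infty)t}\bigl(\mathcal{K}(t)-\lambda^2\mathcal{K}_2(+\infty)\bigr)e^{\lambda^2\mathcal{K}_2(+\infty)t}\Psi$. Iterating this integral equation and using that the conjugating exponentials equal $\mathcal{P}+O(\lambda^2)$ wherever the integrand is supported on the fast scale, I obtain $\Psi(t)=\mathcal{P}+\lambda\int_0^{\infty}\mathcal{K}_1(s)\,ds+O(\lambda^2)$ uniformly for $t=O(\lambda^{-2})$; here the hypothesis $\mathcal{K}_3(+\infty)=0$ is precisely what prevents the $\lambda^3\mathcal{K}_3$ contribution from accumulating to order $\lambda$ over the long time. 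A direct evaluation, using $\mathcal{K}_1(t)=\mathcal{P}\mathcal{L}e^{\mathcal{L}_0 t}\mathcal{P}$, the relation $\mathcal{P}\mathcal{L}\Lambda\mathcal{P}=\mathcal{K}_1(+\infty)=0$ and $\int_0^{\infty}e^{\mathcal{L}_0 s}(I-\Lambda)\,ds=\mathcal{L}_0^{(-1)}(\Lambda-I)$, gives $\int_0^{\infty}\mathcal{K}_1(s)\,ds=\mathcal{P}\mathcal{L}\mathcal{L}_0^{(-1)}(\Lambda-I)\mathcal{P}$, which is the $\mathcal{P}$-part of $\mathcal{R}-\mathcal{P}$.

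Next I would insert this into the two terms of the Duhamel formula. In the inhomogeneous integral the $\lambda\mathcal{I}_1$ piece is fast-decaying (as $\mathcal{I}_1(+\infty)=0$), so $\Phi(t,s)$ may be replaced by $e^{\lambda^2\mathcal{K}_2(+\infty)t}\mathcal{P}$ and the integral collapses to $\lambda\, e^{\lambda^2\mathcal{K}_2(+\infty)t}\int_0^\infty\mathcal{I}_1(s)\,ds\,\mathcal{Q}\rho(0)=\lambda\, e^{\lambda^2\mathcal{K}_2(+\infty)t}\mathcal{P}\mathcal{L}\mathcal{L}_0^{(-1)}(\Lambda-I)\mathcal{Q}\rho(0)$, supplying the $\mathcal{Q}$-part of $\mathcal{R}-\mathcal{P}$; the $\lambda^2\mathcal{I}_2$ piece, after substituting $\sigma=\lambda^2 s$, yields $\frac{e^{\lambda^2\mathcal{K}_2(+\infty)t}-1}{\mathcal{K}_2(+\infty)}\mathcal{I}_2(+\infty)\mathcal{Q}\rho(0)$, while $\mathcal{I}_3(+\infty)=0$ kills the only would-be $O(\lambda)$ remainder coming from $\mathcal{I}_3$. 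Collecting the $\mathcal{P}$- and $\mathcal{Q}$-parts into $\mathcal{R}\rho = \mathcal{P}(1+\lambda\mathcal{L}\mathcal{L}_0^{(-1)}(\Lambda-I))\rho$ produces exactly~\eqref{eq:linearOrderSol}.

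The main obstacle is the uniform control of the error on the long timescale $t=O(\lambda^{-2})$: each correction must be shown to be genuinely $O(\lambda^2)$ rather than merely $O(\lambda^2)$ per unit time. The delicate point is the bookkeeping of which contributions have a nonzero limit at $+\infty$ and therefore accumulate linearly in $t$ — these pick up a factor $\lambda^{-2}$ — as opposed to those that decay on the fast scale and stay bounded. It is exactly the four vanishing conditions $\mathcal{K}_1(+\infty)=\mathcal{I}_1(+\infty)=\mathcal{K}_3(+\infty)=\mathcal{I}_3(+\infty)=0$ that remove every dangerous linear-in-$t$ accumulation at orders $\lambda$ and $\lambda^3$, leaving only the slow flow generated by $\lambda^2\mathcal{K}_2(+\infty)$ together with the $O(\lambda)$ boundary-layer renormalisation $\mathcal{R}$ of the initial condition; making this separation rigorous, for instance by Gronwall estimates on the iterated Duhamel series, is the technical heart of the proof.
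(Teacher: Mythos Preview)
Your proposal is correct in spirit and arrives at the same conclusion, but it follows a genuinely different route from the paper's proof. The paper uses the classical method of \emph{matched asymptotic expansions}: it writes an inner solution for $t=O(1)$ (solving $\frac{d}{dt}\mathcal{P}\rho=\lambda\mathcal{K}_1(t)\mathcal{P}\rho+\lambda\mathcal{I}_1(t)\mathcal{Q}\rho(0)+O(\lambda^2)$ by a single integration), an outer solution for $t=O(\lambda^{-2})$ (solving the constant-coefficient equation $\frac{d}{dt}\mathcal{P}\rho=\lambda^2\mathcal{K}_2(+\infty)\mathcal{P}\rho+\lambda^2\mathcal{I}_2(+\infty)\mathcal{Q}\rho(0)$ with an undetermined initial datum $R$), and then fixes $R$ by matching the large-$t$ limit of the inner solution to the small-$t$ limit of the outer one. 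The formula $R=\mathcal{P}\rho(0)+\lambda\int_0^{\infty}(\mathcal{K}_1\mathcal{P}+\mathcal{I}_1\mathcal{Q})\rho(0)\,dt'$ drops out directly from this matching, and the integral $\int_0^{\infty}e^{\mathcal{L}_0 t'}dt'=\mathcal{L}_0^{(-1)}(\Lambda-I)$ yields $\mathcal{R}$.

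You instead work with the exact Duhamel representation, factor the propagator as $\Phi(t,0)=e^{\lambda^2\mathcal{K}_2(+\infty)t}\Psi(t)$, and read off the boundary-layer correction $\Psi(t)\to\mathcal{P}+\lambda\int_0^{\infty}\mathcal{K}_1$ by iteration. This is closer to a multiple-scales or averaging argument, and it has the advantage of making explicit \emph{why} the four vanishing conditions are needed (each kills a term that would otherwise accumulate to $O(\lambda)$ over the long time), whereas in the paper the role of $\mathcal{K}_3(+\infty)=\mathcal{I}_3(+\infty)=0$ is only implicit in writing the outer equation with remainder $O(\lambda^4)$. Your approach also keeps the uniform error control in view, at the cost of a more cumbersome bookkeeping of the iterated integrals; the paper's matching argument is shorter and more structural but, as often with matched asymptotics, leaves the rigorous justification of the matching step to the cited references.
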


\begin{proof}
	We will use the  asymptotic matching \cite[Section 7.2]{Binder1978}, \cite[Section 1.3]{Lagerstrom1988} to relate the asymptotic solution at $ t = O(\lambda^{-2}) $ and the  asymptotic solution at  $t = O(1)$, in particular  at $t = 0$.
	Equation \eqref{eq:masterEquation} in several first orders of perturbation theory takes the form 
	\begin{equation*}
		\frac{d}{dt}\mathcal{P} \rho(t) = \lambda  \mathcal{K}_{1}(t) \mathcal{P} \rho(t) + \lambda  \mathcal{I}_{1}(t)  \mathcal{Q} \rho(0)  + O(\lambda^2)
	\end{equation*}
	for fixed $t$. So the inner solution, i.e. the solution for  $t = O(1)$, takes the form
	\begin{equation*}
		\mathcal{P} \rho(t) = \mathcal{P} \rho(0) + \lambda \int_0^{t}\mathcal{K}_1(t')dt' \mathcal{P} \rho(0) + \lambda \int_0^{t}\mathcal{I}_1(t')dt'  \mathcal{Q} \rho(0) +  O(\lambda^2).
	\end{equation*}
	Under enhanced relaxation conditions if  $ \mathcal{K}_1(+\infty) =  \mathcal{I}_1(+\infty) = \mathcal{K}_3(+\infty) =  \mathcal{I}_3(+\infty) = 0 $ for $t = O(\lambda^{-2})$ we obtain
	\begin{equation*}
		\frac{d}{dt}\mathcal{P} \rho(t) =\lambda^2 \mathcal{K}_{2}(+\infty)\mathcal{P} \rho(t) + \lambda^2 \mathcal{I}_{2}(+\infty) \mathcal{Q} \rho(0)+ O(\lambda^4).
	\end{equation*}
	So the outer solution, i.e. the solution for  $t = O(\lambda^{-2}) $, takes the form
	\begin{equation*}
		\mathcal{P} \rho(t) = e^{ \lambda^2 \mathcal{K}_{2}(+\infty) t} R +\frac{ e^{ \lambda^2 \mathcal{K}_{2}(+\infty) t}  -1}{\mathcal{K}_{2}(+\infty)} \mathcal{I}_{2}(+\infty)  \mathcal{Q} \rho(0) + O(\lambda^2),
	\end{equation*}
	where $R$ is some constant matrix. For asymptotic matching one should equate the outer solution for $t = O(1)$
	\begin{equation*}
		\mathcal{P} \rho(t) =R + O(\lambda^2)
	\end{equation*}
	and the inner solution for $t = O(\lambda^{-2})$
	\begin{equation*}
		\mathcal{P} \rho(t) = \mathcal{P} \rho(0) + \lambda \int_0^{\infty}\mathcal{K}_1(t')dt' \mathcal{P} \rho(0) + \lambda \int_0^{\infty}\mathcal{I}_1(t')dt'  \mathcal{Q} \rho(0) +  O(\lambda^2),
	\end{equation*}
	which leads to	
	\begin{equation*}
		R =\mathcal{P} \rho(0) +  \lambda \left( \int_0^{+ \infty}\mathcal{K}_1(t')dt' \mathcal{P} \rho(0) +  \int_0^{+ \infty}\mathcal{I}_1(t')dt'  \mathcal{Q} \rho(0) \right).
	\end{equation*}
	Taking into account formulae \eqref{eq:K1underRel} and \eqref{eq:I1underRel} we have
	\begin{align*}
		R &=\mathcal{P} \rho(0) +  \lambda \left( \int_0^{+ \infty} dt' \mathcal{P}  \mathcal{L} e^{\mathcal{L}_0 t'} \mathcal{P} \rho(0) +  \int_0^{+ \infty} dt' \mathcal{P}  \mathcal{L} e^{\mathcal{L}_0 t'}  \mathcal{Q} \rho(0) \right)\\
		&= \mathcal{P} \rho(0) +  \lambda   \mathcal{P}  \mathcal{L}	\int_0^{+ \infty} dt'  e^{\mathcal{L}_0 t'}  \rho(0).
	\end{align*}
	As
	\begin{equation*}
		\int_0^{+ \infty} dt'  e^{\mathcal{L}_0 t'}  = \lim\limits_{t \rightarrow \infty}\frac{e^{\mathcal{L}_0 t} - 1}{\mathcal{L}_0} = \mathcal{L}_0^{(-1)} (\Lambda - I),
	\end{equation*}
	where $ \mathcal{L}_0^{(-1)}$ can be understood as in \eqref{eq:pseudoInverse} taking into account that $I -\Lambda$ is the projector on the image of $\mathcal{L}_0$, but similarly to \eqref{eq:inverseIndep} the hole expression is independent of the particular choice of the pseudoinverse, then we have
	\begin{equation*}
		R =\mathcal{P} \rho(0) + \lambda \mathcal{P}  \mathcal{L}  \mathcal{L}_0^{(-1)} (\Lambda - I)  \rho(0) =\mathcal{P}  (1+ \lambda  \mathcal{L}  \mathcal{L}_0^{(-1)} (\Lambda - I) )  \rho(0).
	\end{equation*}
	Thus, we obtain \eqref{eq:linearOrderSol} with $\mathcal{R}$ defined by \eqref{eq:renormailzationMap}.
\end{proof}

Remark that such a derivation of the renormalization operator suggests that in higher orders of perturbation theory it should coincide with perturbative expansion of $\mathcal{P}\mathcal{U}_{0}^{+\infty}$, i.e. with an open system analog of the Moller operator arising in scattering theory \cite{Derezinski2010}.

Very often one is interested only in the Bogolubov-van Hove limit without corrections and only in the case of the initial conditions consistent with the projector, i.e. the initial conditions are such that $ \mathcal{P} \rho(0) = \rho(0)$. Then one could use the following theorem.
\begin{theorem}
	Let the relaxation conditions of any order be satisfied and   $ \mathcal{K}_1(+\infty) = 0  $ and $\mathcal{Q} \rho(0)=0$, then for $t = O(\lambda^{-2})$ one has
	\begin{equation*}
		\mathcal{P} \rho(t) = e^{ \lambda^2 \mathcal{K}_{2}(+\infty) t} \mathcal{P} \rho(0) + O(\lambda).
	\end{equation*}
\end{theorem}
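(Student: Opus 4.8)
The plan is to specialize the asymptotic-matching argument of Theorem~\ref{th:linearOrderEq} to this weaker setting, exploiting two simplifications. First, since $\mathcal{Q}\rho(0) = 0$, the inhomogeneous term $\mathcal{I}(t)\mathcal{Q}\rho(0)$ in the master equation~\eqref{eq:masterEquation} vanishes identically, so the equation for $\mathcal{P}\rho(t)$ reduces to the homogeneous form $\frac{d}{dt}\mathcal{P}\rho(t) = \mathcal{K}(t)\mathcal{P}\rho(t)$. Consequently only the relaxation conditions governing the $\mathcal{K}_n(t)$ are needed, and no control over $\mathcal{I}(t)$ — hence no enhanced relaxation conditions — is required. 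Second, the target accuracy is only $O(\lambda)$, rather than the $O(\lambda^2)$ of Theorem~\ref{th:linearOrderEq}, which lets several corrections be absorbed into the error term.

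First I would construct the inner solution on the timescale $t = O(1)$. Keeping only the leading order of the homogeneous equation gives $\frac{d}{dt}\mathcal{P}\rho(t) = \lambda\mathcal{K}_1(t)\mathcal{P}\rho(t) + O(\lambda^2)$, so that $\mathcal{P}\rho(t) = \mathcal{P}\rho(0) + \lambda\int_0^t\mathcal{K}_1(t')\,dt'\,\mathcal{P}\rho(0) + O(\lambda^2)$. Next I would construct the outer solution on the Bogolubov-van Hove timescale $t = O(\lambda^{-2})$. Because $\mathcal{K}_1(+\infty) = 0$ under the relaxation conditions, the $O(\lambda)$ term no longer drives the dynamics at this scale and the leading generator is $\lambda^2\mathcal{K}_2(+\infty)$; integrating over a time of order $\lambda^{-2}$ produces an $O(1)$ effect, whereas the residual $\lambda^3\mathcal{K}_3(+\infty)$ accumulates only to $O(\lambda)$. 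Hence the outer solution takes the form $\mathcal{P}\rho(t) = e^{\lambda^2\mathcal{K}_2(+\infty)t}R + O(\lambda)$ for some constant matrix $R$.

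The matching step fixes $R$. Equating the outer solution at $t = O(1)$ with the inner solution evaluated as $t \to \infty$ — using, as in the proof of Theorem~\ref{th:linearOrderEq}, that $\int_0^{+\infty}\mathcal{K}_1(t')\,dt' = \mathcal{P}\mathcal{L}\mathcal{L}_0^{(-1)}(\Lambda - I)\mathcal{P}$ is finite by~\eqref{eq:K1underRel} — gives $R = \mathcal{P}\rho(0) + \lambda\,\mathcal{P}\mathcal{L}\mathcal{L}_0^{(-1)}(\Lambda - I)\rho(0) + O(\lambda^2)$. The crucial observation is that here the renormalization correction is itself of order $\lambda$, so at the required accuracy $R = \mathcal{P}\rho(0) + O(\lambda)$: the renormalization map $\mathcal{R}$ that survives in Theorem~\ref{th:linearOrderEq} collapses to the identity at this order. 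Substituting back yields $\mathcal{P}\rho(t) = e^{\lambda^2\mathcal{K}_2(+\infty)t}\mathcal{P}\rho(0) + O(\lambda)$.

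I expect the main obstacle to be the bookkeeping that keeps every discarded contribution genuinely $O(\lambda)$ after propagation over the long timescale. The essential ingredient is boundedness of the propagator $e^{\lambda^2\mathcal{K}_2(+\infty)t}$ uniformly for $t = O(\lambda^{-2})$, so that an $O(\lambda)$ error in $R$ stays $O(\lambda)$ and the $\lambda^3\mathcal{K}_3(+\infty)$ term accumulates only to $O(\lambda)$; this is the property that would need to be secured from the relaxation structure of $\mathcal{K}_2(+\infty)$. It is precisely this weaker error tolerance that explains why, unlike Theorem~\ref{th:linearOrderEq}, neither the vanishing of $\mathcal{K}_3(+\infty)$ nor the enhanced relaxation conditions need be assumed.
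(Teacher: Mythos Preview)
Your proposal is correct and follows essentially the same approach as the paper, which simply says the proof is ``similar to the one of Theorem~\ref{th:linearOrderEq}'' with the two simplifications you identify: the vanishing inhomogeneity removes any need for enhanced relaxation conditions, and the weaker $O(\lambda)$ accuracy absorbs both the renormalization correction and the $\lambda^3\mathcal{K}_3(+\infty)$ contribution. Your write-up is in fact more explicit than the paper's own terse sketch.
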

The proof is similar to the one of Theorem \ref{th:linearOrderEq}. But there is no inhomogeneity in the time-convolutionless master equation for  the initial conditions consistent with the projector, so one does not need to use the enhanced relaxation conditions in such a case. And also the first order corrections to the Bogolubov-van Hove limit are omitted.

\section{Example}\label{sec:example}

Let us consider an example from \cite[Section 5]{Teretenkov2019}. In the notation of this work it takes the form
\begin{align}
	\mathcal{L}_0 &= \gamma \left(|0 \rangle \langle 2| \cdot |2 \rangle \langle 0| - \frac12\{ |2 \rangle \langle 2|, \;  \cdot \;\} \right), \nonumber\\
	\mathcal{L}  &= - i g [ |2\rangle \langle 1| + | 1 \rangle \langle2|, \; \cdot \; ], \nonumber\\
	\mathcal{P} \rho &= (\rho_{00} + \rho_{22}) |0 \rangle \langle 0| + \rho_{01} |0 \rangle \langle 1| + \rho_{10} |1 \rangle \langle 0| + \rho_{11} |1 \rangle \langle 1|. \label{eq:projExample}
\end{align}

First, let us illustrate the results of Section \ref{sec:Explicit} with this example. The generator has the weak coupling type form for a generic Hamiltonian, so it is easy to solve it explicitly
\begin{align*}
	e^{\mathcal{L}_0 t} \rho =& e^{-\gamma t} \rho_{22} |2 \rangle \langle 2| + (\rho_{00} -  e^{-\gamma t} \rho_{22})|0 \rangle \langle 0| \\
	&+ \sum_{j=0}^1 e^{-\frac{\gamma}{2} t} (\rho_{2j} |2 \rangle \langle j| + \rho_{j2} |j \rangle \langle 2|) +  \rho_{01}|0 \rangle \langle 1| + \rho_{10}|1 \rangle \langle 0|  +  \rho_{11}|1 \rangle \langle 1|.
\end{align*}

Taking the limit we have
\begin{equation*}
	\lim\limits_{t \rightarrow \infty } 	e^{\mathcal{L}_0 t} \rho = (\rho_{00} + \rho_{22}) |0 \rangle \langle 0| + \rho_{01} |0 \rangle \langle 1| + \rho_{10} |1 \rangle \langle 0| + \rho_{11} |1 \rangle \langle 1|= \mathcal{P} \rho,
\end{equation*}
i.e. we obtain $\Lambda = \mathcal{P}$. 

Now let us check
\begin{equation}\label{eq:relaxCondPartCase}
	\mathcal{P}e^{-\mathcal{L}_0 t}  = \mathcal{P}
\end{equation}
by direct calculation
\begin{equation*}
	\mathcal{P}e^{-\mathcal{L}_0 t} = (\rho_{00} -  e^{\gamma t} \rho_{22} + e^{\gamma t} \rho_{22}) |0 \rangle \langle 0| + \rho_{01} |0 \rangle \langle 1| + \rho_{10} |1 \rangle \langle 0| + \rho_{11} |1 \rangle \langle 1| = \mathcal{P} \rho.
\end{equation*}

Let us remark that the similar condition holds for some other projectors and physical models, for example the ones in \cite{Reiter2012}.  Equation \eqref{eq:relaxCondPartCase} leads to the fact that the enhanced relaxation conditions are satisfied. 

Similarly, let us calculate
\begin{equation*}
	e^{\mathcal{L}_0 t} \mathcal{P}\rho  =  (\rho_{00} + \rho_{22}) |0 \rangle \langle 0| + \rho_{01} |0 \rangle \langle 1| + \rho_{10} |1 \rangle \langle 0| + \rho_{11} |1 \rangle \langle 1| =  \mathcal{P}.
\end{equation*}

Thus, Equation \eqref{eq:K1underRel} takes the form
\begin{align*}
	\mathcal{K}_{1}(t) \rho &=	\mathcal{P}  \mathcal{L} e^{\mathcal{L}_0 t} \mathcal{P} \rho= \mathcal{P}  \mathcal{L} \mathcal{P} \rho\\
	&=-ig \mathcal{P}[ |2\rangle \langle 1| + | 1 \rangle \langle2|,  (\rho_{00} + \rho_{22}) |0 \rangle \langle 0| + \rho_{01} |0 \rangle \langle 1| + \rho_{10} |1 \rangle \langle 0| + \rho_{11} |1 \rangle \langle 1|]\\
	&=  -ig \mathcal{P}(\rho_{10} |2 \rangle \langle 0| + \rho_{11} |2 \rangle \langle 1| - \rho_{01} |0 \rangle \langle 2| - \rho_{11} |1 \rangle \langle 2|) = 0
\end{align*}
and Equation \eqref{eq:K2underRel} takes the form
\begin{align*}
	\mathcal{K}_{2}(t) &=  \mathcal{P} \mathcal{L} ([e^{\mathcal{L}_0 t} , \; \cdot \;] [\mathcal{L}_0, \; \cdot \; ]^{(-1)} \mathcal{L})   \mathcal{P} 	+ \mathcal{P}  \mathcal{L}  e^{\mathcal{L}_0 t}  \mathcal{P} ( [\mathcal{L}_0, \; \cdot \; ]^{(-1)} \mathcal{L})  (1-  e^{\mathcal{L}_0 t} ) \mathcal{P}\\
	&= \mathcal{P} \mathcal{L} e^{\mathcal{L}_0 t} ([\mathcal{L}_0, \; \cdot \; ]^{(-1)} \mathcal{L})   \mathcal{P}  - \mathcal{P} \mathcal{L} ([\mathcal{L}_0, \; \cdot \; ]^{(-1)} \mathcal{L})   \mathcal{P}  =  - \mathcal{P} \mathcal{L} (1 -  e^{\mathcal{L}_0 t} )([\mathcal{L}_0, \; \cdot \; ]^{(-1)} \mathcal{L})   \mathcal{P}.
\end{align*}
It is not necessary to calculate $[\mathcal{L}_0, \; \cdot \; ]^{(-1)}$ in this formula, we just need to calculate $[\mathcal{L}_0, \; \cdot \; ]^{(-1)} \mathcal{L}$, which can be done by finding some solution $ \mathcal{X}$ of  the equation
\begin{equation*}
	[\mathcal{L}_0, \mathcal{X}] = \mathcal{L},
\end{equation*}
which leads to 
\begin{equation*}
	[\mathcal{L}_0, \; \cdot \; ]^{(-1)} \mathcal{L} =- \frac{2ig}{\gamma} \biggl(\{|1 \rangle \langle 2| - |2 \rangle \langle 1| ,\; \cdot \;\} + 2| 0\rangle\langle 1| \; \cdot \;  | 2\rangle\langle 0| - 2 | 0\rangle\langle 2| \; \cdot \;  | 1\rangle\langle 0|\biggr).
\end{equation*}
Then we obtain
\begin{equation*}
	\mathcal{K}_{2}(t) = \frac{4g^2}{\gamma} (1 - e^{- \frac{\gamma}{2} t}) \left(|0 \rangle \langle 1| \cdot |1 \rangle \langle 0| - \frac12 \{ |1 \rangle \langle 1|, \; \cdot \;\} \right).
\end{equation*}

Thus, for the initial conditions consistent with projector $\mathcal{P}$ one obtains the second-order master equation of the form
\begin{equation*}
	\frac{d}{dt} \rho_S(t) = \lambda^2 \frac{4g^2}{\gamma} (1 - e^{- \frac{\gamma}{2} t}) \left(|0 \rangle \langle 1| \rho_S(t) |1 \rangle \langle 0| - \frac12 \{ |1 \rangle \langle 1|, \rho_S(t)\} \right),
\end{equation*}
where $\rho_S(t) \equiv \mathcal{P}\rho(t)$, which coincides with Equation (10.53) in \cite{Breuer2002}. But there it arose in the usual setup of open quantum systems theory, i.e. for system dynamics, when the whole dynamics of the system and the reservoir was unitary, but the reservoir had an infinite number of degrees of freedom. This is due to the fact that these models are related by the pseudomode approach \cite{ Garraway96, Garraway97, Garraway97a, Teretenkov2019, Luchnikov19}.

Now let us illustrate the results of Section \ref{sec:relaxCond} with this model. As $ \mathcal{K}_{1}(t) =0$, then we obviously  have $\mathcal{K}_{1}(+\infty) = 0$. Using Lemma \ref{lem:underRelInf}  we obtain
\begin{align*}
	\mathcal{K}_{2}(+\infty) &=- \mathcal{P} \mathcal{L}( [\mathcal{L}_0, \; \cdot \; ]^{(-1)} \mathcal{L}) \mathcal{P} =  \frac{4g^2}{\gamma} \left(|0 \rangle \langle 1| \cdot |1 \rangle \langle 0| - \frac12\{ |1 \rangle \langle 1|, \; \cdot \;\}  \right),\\
	\mathcal{I}_{1}(+\infty) &= \mathcal{P}  \mathcal{L} \mathcal{P}  \mathcal{Q} = 0,\\
	\mathcal{I}_{2}(+\infty) &= \mathcal{P} \mathcal{L} ([\mathcal{P}  , \; \cdot \;] [\mathcal{L}_0, \; \cdot \; ]^{(-1)} \mathcal{L})   \mathcal{Q}	+ \mathcal{P}  \mathcal{L}   \mathcal{P} ( [\mathcal{L}_0, \; \cdot \; ]^{(-1)} \mathcal{L})   \mathcal{Q}\\
	&= \mathcal{P} \mathcal{L} \mathcal{P} [\mathcal{L}_0, \; \cdot \; ]^{(-1)} \mathcal{L})   \mathcal{Q} -  \mathcal{P} \mathcal{L} \mathcal{P} [\mathcal{L}_0, \; \cdot \; ]^{(-1)} \mathcal{L})  \mathcal{P}  \mathcal{Q} = 0.
\end{align*}
Similar, but awkward calculations lead to $\mathcal{K}_{3}(+\infty) = \mathcal{I}_{3}(+\infty) =0$. Thus, we can apply Theorem \ref{th:linearOrderEq}. To do it let us calculate
\begin{equation*}
	\mathcal{L}_0^{(-1)} =  \gamma^{-1} \left(|0 \rangle \langle 2| \cdot |2 \rangle \langle 0| - 2\{ |2 \rangle \langle 2|, \;\cdot \;\}  + 3 |2 \rangle \langle 2| \cdot |2 \rangle \langle 2|  \right),
\end{equation*}
which leads by formula \eqref{eq:renormailzationMap} to the following expression
\begin{align}
	\mathcal{R} \rho =& \mathcal{P}  (1 - \lambda  \mathcal{L}  \mathcal{L}_0^{(-1)}  \mathcal{Q} )  \rho \nonumber\\
	=& \left(\rho_{00} + \rho_{22} - \lambda \frac{2 i g}{\gamma}( \rho_{12} - \rho_{21}) \right) |0 \rangle \langle 0| + \left(\rho_{01} + \lambda \frac{2 i g}{\gamma} \rho_{02} \right) |0 \rangle \langle 1| + \left(\rho_{10} - \lambda \frac{2 i g}{\gamma} \rho_{20} \right) |1 \rangle \langle 0| \nonumber\\
	&+ \left(\rho_{11} + \lambda \frac{2 i g}{\gamma}( \rho_{12} - \rho_{21})\right) |1 \rangle \langle 1|. \label{eq:renomrmInParticularModel}
\end{align}
Due to $ \mathcal{I}_{2}(+\infty) = 0$ formula \eqref{eq:linearOrderSol} reduces to
\begin{equation*}
	\mathcal{P} \rho(t) = e^{ \lambda^2 \mathcal{K}_{2}(+\infty) t} \mathcal{R} \rho(0) + O(\lambda^2)
\end{equation*}
for $ t = O(\lambda^{-2})$, $\lambda \rightarrow +0$. Hence, for $ \rho_S(t) =\mathcal{P} \rho(t)  $ by omitting terms $ O(\lambda^2)$ we obtain the following equation
\begin{equation*}
	\frac{d}{dt} \rho_S(t) = \lambda^2 \frac{4g^2}{\gamma} \left(|0 \rangle \langle 1| \rho_S(t) |1 \rangle \langle 0| - \frac12 \{ |1 \rangle \langle 1|, \rho_S(t)\} \right)
\end{equation*}
for $ t = O(\lambda^{-2})$, $\lambda \rightarrow +0$. But the initial conditions for such an equation should be renormalized $\rho_S(+0) = \mathcal{R}\rho(0)$ using formula \eqref{eq:renomrmInParticularModel} if the initial conditions are not consistent with the projector. Without this correction we reproduce the results of \cite{Teretenkov2019}. With such a correction it is somewhat similar to \cite{Teretenkov2021, Teretenkov2023, Teretenkov2021Long}  in the sense that after bath correlation we have the GKSL equation, but with the renormalized initial conditions. What is different here is that the corrections occur from inhomogeneous terms in time-convolutionless. And what is especially interesting is that such terms do not contribute to dynamics after the bath correlation times, but the initial conditions are renormalized due to the initial period of order of the bath correlation time, where the non-consistency of the initial state with the projector contributes to the dynamics. 

It is possible to ask why we interpret such an example as a composite system. Firstly, in \cite{Teretenkov2019, Teretenkov19OnePart} it was shown, that such  an example arises as a zero- and one-particle restriction of the GKSL generator defined acting on matrices which are defined on tensor product Hilbert space. If the initial density matrix is restricted to zero- and one-particle spaces, then this is preserved during the evolution. Moreover, under such a restriction  Argyres-Kelley projector \cite{Argyres1964} the usual one for open quantum systems takes exactly form \eqref{eq:projExample} as discussed in \cite{Teretenkov2019, Teretenkov19OnePart}. Let as also remark that if the system is composite or not is a bit subjective in general \cite{Chernega2014}.

\section{Conclusions}

In this work we have derived explicit formulae for asymptotic expansion of time-convolutionless equations for composite open systems. More precisely, we have calculated all the integrals with respect to times in the terms which define such an expansion and obtained purely algebraic expressions for them. We have  introduced relaxation conditions and enhanced relaxation conditions, which provide time-convolutionless equations with the well-defined  Bogolubov-van Hove limit  for the initial condition consistent with the projector and for the general initial condition, respectively. We also have discussed the first order correction to the Bogolubov-van Hove limit and have given an example, when these corrections do not contribute to the master equations themselves, but renormalize their initial conditions.

We think that our results could be interesting for analysis of projection-based derivation of master equations in the usual open systems setup as well. As here we deal just with finite dimensional equations, then  we have much more hope both to derive the mathematically strict results in our setup and to verify it by numerical or symbolical calculations, than for the case of unitary dynamics, but the reservoir with infinite degrees of freedom. Nevertheless, we think  that under relaxation conditions the finite dimensional GKSL equations for the open composite systems capture main features of the usual unitary setup in the case, when the bath correlation functions of the reservoir are decaying fast enough to provide well-defined terms in all the orders of the Bogolubov-van Hove perturbation theory. 

Thus, our setup could be used as a simplified playground for more complicated open quantum systems setups. For example, it is interesting to prove strictly the results from \cite{Trushechkin2021} using the Bogolubov-van Hove perturbation theory in our setup. Moreover, possibly the results of the present paper could give a more explicit formula at least in our setup. It might be easier to analyze a perturbative expansion of steady states in such a setup, which is also interesting from the point of view of the usual open systems setup \cite{Trushechkin2022, Latune2022}. Also, the precision testing of the approaches, based on effective generators \cite{Trubilko2019, Trubilko2020, Basharov2021, Teretenkov2022Effective}, might be simplified in our setup.

As some other directions for further study, we should mention the problem of giving explicit description of generators for which relaxation conditions are satisfied. It is also interesting to analyze the multi-time correlations in  our setup. In particular, it is interesting  to check the validity of generalized regression formulae   in different orders of the Bogolubov-van Hove perturbation theory due to modern discussion in such a direction  \cite{Teretenkov2019, Teretenkov19OnePart, Lonigro2022, Chruscinski2023}. It would also be interesting to investigate other approaches taking into account non-factorized initial states in our setup \cite{Trevisan2021}. 

	\section{Acknowledgments}
	The authors thank H. Sh. Meretukov and R. Singh for the fruitful discussions of the problems considered in this work.


%
%

\end{document}